\newenvironment{fminipage}%
  {\begin{Sbox}\begin{minipage}}%
  {\end{minipage}\end{Sbox}\fbox{\TheSbox}}
\newenvironment{algbox}[0]{\vskip 0.2in
\noindent 
\begin{fminipage}{6.3in}
}{
\end{fminipage}
\vskip 0.2in
}
\newcommand{\norm}[1]{\left\lVert#1\right\rVert}
\DeclareFontFamily{U}{mathx}{\hyphenchar\font45}
\DeclareFontShape{U}{mathx}{m}{n}{
      <5> <6> <7> <8> <9> <10>
      <10.95> <12> <14.4> <17.28> <20.74> <24.88>
      mathx10
      }{}
\DeclareSymbolFont{mathx}{U}{mathx}{m}{n}
\DeclareMathSymbol{\bigtimes}{1}{mathx}{"91}
\definecolor{DarkRed}{rgb}{0.5,0.1,0.1}
\definecolor{DarkBlue}{rgb}{0.1,0.1,0.5}
\definecolor{ForestGreen}{rgb}{0.1333,0.5451,0.1333}
\definecolor{Red}{rgb}{0.9,0,0}
\crefname{property}{property}{Property}
\crefname{equation}{eq}{Eq}
\def\BState{\State\hskip-\ALG@thistlm}
\newtheorem{theorem}{Theorem}[section]
\newtheorem{example}{Example}[section]
\newtheorem{lemma}{Lemma}[section]
\newtheorem{corollary}[lemma]{Corollary}
\newtheorem{definition}[lemma]{Definition}
\DeclareMathOperator{\sign}{sgn}
\newtheorem*{theorem*}{Theorem}
\newtheorem*{claim*}{Claim}
\newtheorem*{proposition*}{Proposition}
\newtheorem*{lemma*}{Lemma}
\newtheorem*{problem*}{Problem}
\crefname{lemma}{Lemma}{Lemmas}
\crefname{claim}{Claim}{Claims}
\newtheorem{mdresult}{Result}
\newtheoremstyle{restate}{}{}{\itshape}{}{\bfseries}{~(restated).}{.5em}{\thmnote{#3}}
\theoremstyle{restate}
\renewcommand{\qed}{\nobreak \ifvmode \relax \else
      \ifdim\lastskip<1.5em \hskip-\lastskip
      \hskip1.5em plus0em minus0.5em \fi \nobreak
      \vrule height0.75em width0.5em depth0.25em\fi}
\renewcommand{\leq}{\leqslant}
\renewcommand{\geq}{\geqslant}
\title{
Entrywise Approximate Laplacian Solving
%Entrywise Approximate Solution of Laplacian Systems
%Entrywise Approximate Solution of Laplacian Linear Systems
%Entrywise Approximate Laplacian Solving
}
\author{
%Anonymous Authors
Jingbang Chen\footnote{jingbang.chen@uwaterloo.ca} \and Mehrdad Ghadiri\footnote{mehrdadg@mit.edu} \and Hoai-An Nguyen\footnote{hnnguyen@andrew.cmu.edu} \and Richard Peng\footnote{yangp@cs.cmu.edu} \and Junzhao Yang\footnote{junzhaoy@cs.cmu.edu}
}
\date{}
\newcommand\ee{\boldsymbol{\mathit{e}}}
\newcommand\xx{\boldsymbol{\mathit{x}}}
\newcommand\bb{\boldsymbol{\mathit{b}}}
\newcommand\DD{\boldsymbol{\mathit{D}}}
\renewcommand\AA{\boldsymbol{\mathit{A}}}
\newcommand\BB{\boldsymbol{\mathit{B}}}
\newcommand\CC{\boldsymbol{\mathit{C}}}
\newcommand\NN{\boldsymbol{\mathit{N}}}
\newcommand\TT{\boldsymbol{\mathit{T}}}
\renewcommand\SS{\boldsymbol{\mathit{S}}}
\newcommand\II{\boldsymbol{\mathit{I}}}
\newcommand\LL{\boldsymbol{\mathit{L}}}
\newcommand\MM{\boldsymbol{\mathit{M}}}
\newcommand\XX{\boldsymbol{\mathit{X}}}
\newcommand\YY{\boldsymbol{\mathit{Y}}}
\newcommand\vv{\boldsymbol{\mathit{v}}}
\newcommand\zz{\boldsymbol{\mathit{z}}}
\newcommand\uu{\boldsymbol{\mathit{u}}}
\newcommand\pp{\boldsymbol{\mathit{p}}}
\newcommand\MMtil{\widetilde{\boldsymbol{\mathit{M}}}}
\newcommand\ZZ{\boldsymbol{\mathit{Z}}}
\newcommand\pr{\textbf{Pr}}
\renewcommand{\sc}{\textsc{Sc}}
\newcommand\N{\mathbb{N}}
\renewcommand\P{\mathbb{P}}
\newcommand\E{\mathbb{E}}
\newcommand{\vertiii}[1]{{\left\vert\kern-0.25ex\left\vert\kern-0.25ex\left\vert #1 
    \right\vert\kern-0.25ex\right\vert\kern-0.25ex\right\vert}}
\newcommand{\approxbar}{\overline{\approx}}
\newcommand{\Otil}{\widetilde{O}}
\newcommand{\sink}{\textsc{Sink}}
\newcommand{\R}{\mathbb{R}}
\begin{document}

\maketitle

\begin{abstract}
We study the escape probability problem in random walks over graphs. 
Given vertices, $s,t,$ and $p$, the problem asks for the probability that a random walk starting at $s$ will hit $t$ before hitting $p$. 
Such probabilities can be exponentially small even for unweighted undirected graphs with polynomial mixing time. 
Therefore current approaches, which are mostly based on fixed-point arithmetic, require $n$ bits of precision in the worst case. 
%\jy{Is it better to use $\Omega(n)$ here?} \mg{Nah, this is the abstract. I don't want to have too much math here.}

We present algorithms and analyses for weighted directed graphs under floating-point arithmetic and improve the previous best running times in terms of the number of bit operations.
We believe our techniques and analysis could have a broader impact on the computation of random walks on graphs both in theory and in practice. %\hn{maybe should have a sentence saying why this problem = entrywise approx laplacian solving } \mg{It's not but to solve it we find the inverse and have entrywise approximation for it.}
\end{abstract}

\clearpage

\section{Introduction}
Solving linear systems is the workhorse of the modern approach to optimization. 
There has been a significant effort in the past two decades to design more efficient algorithms for structured linear systems \cite{ST04:journal,PV21,FFG22}. 
The best example of these efforts is the near-linear time solvers for Laplacian systems that unlocked many fast algorithms for graph problems ranging from the computation of different probabilities associated with random walks (Markov chains) \cite{CKPPRSV17} to network flow problems \cite{CKLPGS22}.

Despite the significant progress towards such algorithms, the practical usage of these algorithms is poorly understood.
Perhaps the main reason is that many of these algorithms are analyzed under unrealistic assumptions and number systems such as exact arithmetic (real-RAM) and fixed-point arithmetic. 
The former is a model of computation that assumes arithmetic operations can be performed to infinite precision in constant time. 
The latter is a more realistic assumption that considers finite precision, but it is still widely different from how real computers operate. 
In this paper, we initiate the study of solving Laplacian linear systems under floating-point arithmetic.

A major motivation for our study is the computation of probabilities associated with random walks on graphs. 
As we will see in \Cref{example:bad-example}, such probabilities can be exponentially small even for seemingly nice unweighted undirected graphs. 
Note that to even store a number $c$ which is around $2^{-n}$ using fixed-point numbers, we need about $O(n)$ time and bits of memory.
However, with floating point numbers, we can store a number that is within a $e^{\epsilon}$ (multiplicative) factor of $c$ with only $O(\log n + \log(1/\epsilon))$ time and bits of memory. 
The $\log n$ factor is to store the exponent and $\log(1/\epsilon)$ is to store the required bits of precision.
Therefore when working with large and small numbers, it is more efficient to use floating-point numbers.
However, the stability analysis of floating-point arithmetic is often very complicated.
Even seemingly trivial operations such as adding $n$ numbers together can have prohibitive errors that prevent an algorithm from running successfully --- see Lectures 14 and 15 in \cite{TB97:book}.
This is the motivation behind methods such as Kahan's summation algorithm, which reduces the error for adding numbers \cite{K65}.

In this paper, we study the escape probability problem: the probability of a random walk starting at vertex $s$ in a graph to hit vertex $t$ before hitting vertex $p$.

\begin{restatable}{theorem}{mainTheorem}
\label{thm:main-theorem}
    Given a weighted directed graph $G=(V,E)$ with $n$ vertices and nonnegative edge weights that are given with $L$ bits in floating-point, and $t,p\in V$, there is an algorithm that computes the escape probability for all starting vertices $s\in V$ within an $e^{\epsilon}$ multiplicative factor with $\Otil\left(n^3 \cdot \left(L + \log \frac{1}{\epsilon}\right)\right)$ bit operations.
\end{restatable}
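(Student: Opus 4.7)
The plan is to reduce the escape probability problem to a nonsymmetric M-matrix linear solve and then analyze (block) Gaussian elimination under floating-point arithmetic, exploiting the fact that the elimination is entirely cancellation-free.

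First, I would write down the harmonic equation for escape probabilities. Let $x_s$ denote the escape probability from $s$, with boundary values $x_t = 1$ and $x_p = 0$. At each interior vertex $s \in V \setminus \{t,p\}$ we have $x_s = \sum_v P(s,v)\, x_v$, where $P(s,v) = w(s,v) / \sum_u w(s,u)$ is the transition probability. Collecting interior variables into $x_I$, this becomes $M x_I = b$ with $M = I - P_{II}$ and $b$ supported on in-neighbors of $t$. Because $P_{II}$ is nonnegative with row sums at most $1$ (and strictly less than $1$ on any row reaching $\{t,p\}$), $M$ is a nonsingular M-matrix: positive diagonal, nonpositive off-diagonals, and $M^{-1}$ entrywise nonnegative. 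Both $b$ and the true solution $x_I$ are therefore entrywise nonnegative.

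The core step is Gaussian elimination on $M$, using a pivot order that keeps each Schur complement a principal submatrix of $M$ (e.g., natural order after any symmetric permutation chosen once at the outset). The crucial structural fact is that every Schur complement of a nonsingular M-matrix is again a nonsingular M-matrix. Tracking signs in the update
\[
M^{(k+1)}_{ij} \;=\; M^{(k)}_{ij} \;-\; M^{(k)}_{ik} M^{(k)}_{kj} \,/\, M^{(k)}_{kk},
\]
one sees that each off-diagonal update subtracts a nonpositive quantity from a nonpositive quantity (yielding a nonpositive result), while each diagonal update subtracts a nonnegative quantity from a positive quantity (yielding a positive result). In particular, every addition in elimination and every addition in back-substitution is between same-sign numbers, so no cancellation ever occurs. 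Under standard floating-point semantics with $p$ bits of precision, each such operation incurs only a multiplicative error of $(1 \pm 2^{-p})$; composing these over the $O(n^3)$ arithmetic operations produces a final multiplicative error of at most $(1 + 2^{-p})^{O(n^3)}$, which is bounded by $e^{\epsilon}$ once $p = \Theta(\log(n/\epsilon))$.

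It then remains to bound the exponent range so that each number fits in $\widetilde{O}(L + \log(1/\epsilon))$ bits. I would argue inductively that every intermediate entry lies in $[2^{-q}, 2^{q}]$ for some $q = \mathrm{poly}(n) \cdot L$: the initial entries satisfy this since weights are $L$-bit floating-point, and each elimination step multiplies or divides by at most one such ratio, so the exponent grows by an $\widetilde{O}(L)$ additive amount per row operation and $\widetilde{O}(n L)$ in total. Thus every stored number requires a mantissa of $\widetilde{O}(\log(1/\epsilon))$ bits plus an exponent field of $\widetilde{O}(L)$ bits, and each arithmetic operation on such numbers costs $\widetilde{O}(L + \log(1/\epsilon))$ bit operations. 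Multiplying by $O(n^3)$ arithmetic operations gives the claimed bound.

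I expect the main obstacle to be the exponent control: proving that the pivots $M^{(k)}_{kk}$ cannot become catastrophically small (say, doubly exponentially) under the chosen pivot order. If one could pivot freely by magnitude this would be classical, but we are constrained to orders that preserve the M-matrix property and the substochastic interpretation. The resolution will likely be a combinatorial lemma exploiting the substochastic normalization of $P$: after rescaling each pivot to $1$, the Schur-complement entries can be interpreted as edge weights of a reduced random walk and hence inherit magnitudes controlled by products of original weights along short paths, bounding the cumulative exponent growth to $\mathrm{poly}(n) \cdot L$. Once this range bound is in place, the cancellation-free error analysis above yields the stated $\widetilde{O}(n^3(L + \log(1/\epsilon)))$ bit-operation complexity.
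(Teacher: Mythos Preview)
Your reduction to an M-matrix solve and the overall architecture (recursive/block elimination, sign tracking, $O(n^3)$ arithmetic) match the paper. But there is a real gap in the heart of your error analysis: the diagonal update in Gaussian elimination on an M-matrix is \emph{not} cancellation-free. In
\[
M^{(k+1)}_{ii} \;=\; M^{(k)}_{ii} \;-\; M^{(k)}_{ik} M^{(k)}_{ki}\,/\,M^{(k)}_{kk},
\]
both terms on the right are nonnegative, so this is a genuine subtraction of same-sign quantities. The result is positive (the Schur complement stays an M-matrix), but it can be arbitrarily small compared to either term; each such step can lose a constant fraction of the mantissa, and over $n$ eliminations you can lose $\Theta(n)$ bits. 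Concretely, on the path-plus-hub graph of \Cref{example:bad-example} the relevant pivots shrink geometrically, so with only $p=\Theta(\log(n/\epsilon))$ bits of precision you will compute them with no correct digits at all. Your multiplicative bound $(1+2^{-p})^{O(n^3)}$ is valid only for the off-diagonal updates and for back-substitution, not for the diagonals.

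This is precisely the obstacle the paper is built around. Its fix is to never compute a diagonal by subtraction: it carries along an \emph{excess vector} $\vv$ recording the row-sum slack, updates $\vv$ through the Schur complement by the formula $\widetilde{\uu}=\vv_C-\MM_{CF}\ZZ(FF)\vv_F$ (which is a sum of nonpositive terms, hence cancellation-free), and then reconstructs each diagonal as $-(\vv_i+\sum_{j\neq i}\MM_{ij})$, again a sum of nonpositive numbers. This is essentially the GTH trick lifted to the block/recursive setting; without it your $p=\Theta(\log(n/\epsilon))$ claim fails. Your anticipated obstacle (exponent range of small pivots) is the easier half of the problem; the hard half is computing those small pivots to high \emph{relative} accuracy in the first place, and that requires reorganizing the arithmetic so the diagonal is obtained additively from the excess rather than subtractively from the Schur update.
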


In \Cref{thm:main-theorem}, the $L$ bits refer to both bits of precision and the bits required for the exponent of the floating point number. 
One might think that the near-linear time algorithms for solving Laplacian systems (and generally diagonally-dominant systems) or approaches based on fast-matrix-multiplication achieve a better running time than \Cref{thm:main-theorem}. 
However, as we will discuss extensively in \Cref{sec:discussion-and-related}, these approaches often only count the number of arithmetic operations (not bit operations), and due to the nature of their error (which is bounded norm-wise), they require a significantly larger number of bit operations compared to \Cref{thm:main-theorem}.
Namely, the near-linear-time approaches and fast-matrix-multiplication approaches require about $mn^2$ and $n^{\omega+1}$ bit operations, respectively, where $m$ is the number of edges.

Our main tool in proving \Cref{thm:main-theorem} is an algorithm that given a row diagonally dominant $L$-matrix (RDDL) matrix (see \Cref{def:rddl}) $\NN$ produces a matrix $\ZZ$ such that for all $i,j$, $e^{-\epsilon} \ZZ_{ij} \leq \NN^{-1}_{ij} \leq e^{\epsilon} \ZZ_{ij}$, which we denote with $\NN^{-1} \approxbar_{\epsilon} \ZZ$. Note that not all RDDL matrices are invertible. For example, Laplacian matrices are RDDL and are not invertible --- they have the vector of all ones in their kernel.

\begin{restatable}{theorem}{recursionTheorem}
    \label{thm:main-recursion}
    Let $\MM \in \R^{n\times n}$ be an RDDL matrix and $\vv\in \R_{\leq 0}^{n}$ with at least one entry of $\vv$ being strictly less than zero. Suppose the entries of $\MM$ and $\vv$ are presented with $L$ bits in floating-point. Let $\NN\in\R^{n\times n}$ such that for $i\neq j$, $\NN_{ij}=\MM_{ij}$, and for $i\in[n]$, $\NN_{ii} = -(\vv_i + \sum_{j\in[n]\setminus\{i\}} \MM_{ij})$. If $\NN$ is invertible, then $\textsc{RecInvert}(\MM, \vv, \epsilon)$ (see \Cref{fig:RecInvertWithExcess}) returns a matrix $\ZZ$ such that $\ZZ \approxbar_{\epsilon} \NN^{-1}$ with $\Otil\left(n^3 \cdot \left(L + \log \frac{1}{\epsilon}\right)\right)$ bit operations.
\end{restatable}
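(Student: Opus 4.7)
My plan is a block divide-and-conquer on Schur complements, designed so every intermediate quantity the algorithm computes is a sum of same-signed floating-point numbers and thus avoids catastrophic cancellation. First I would describe the recursion: partition $[n] = F \sqcup C$ into halves and call $\textsc{RecInvert}$ recursively on the pair $(\MM_{FF},\,\vv_F + \MM_{FC}\mathbf{1}_C)$. A short calculation from the definition of $\NN$ shows this choice of excess reproduces exactly the diagonal of $\NN_{FF}$, so the call returns some $\widetilde{\YY} \approxbar \NN_{FF}^{-1}$. Then, using $\widetilde{\YY}$, form the Schur-complement pair
$$\MM^{(S)}_{ij} = \MM_{ij} - \MM_{iF}\,\widetilde{\YY}\,\MM_{Fj}\ (i\neq j\in C),\quad \vv^{(S)} = \vv_C - \MM_{CF}\,\widetilde{\YY}\,\vv_F,$$
which, by the identity $\NN\mathbf{1}=-\vv$, correctly describes $\NN/\NN_{FF}$; recursively invert this to obtain an entrywise approximation of $(\NN/\NN_{FF})^{-1}$. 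Finally, assemble the full inverse via the standard $2\times 2$ block-inverse formula applied to the two recursive outputs together with the off-diagonal blocks of $\NN$.

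Next I would verify two invariants by induction. \emph{Structure preservation}: both new excess vectors stay nonpositive (since $\MM\leq 0$ off-diagonally, $\vv\leq 0$, and $\NN_{FF}^{-1}\geq 0$); a standard M-matrix argument shows the corresponding subproblem $\NN$-matrices remain invertible and their inverses are entrywise nonnegative, which is what makes entrywise multiplicative error bounds meaningful. \emph{No-cancellation}: the subtracted second term in each formula has opposite sign to the first, so absolute values decompose as $|\MM^{(S)}_{ij}| = |\MM_{ij}| + |\MM_{iF}|\,\NN_{FF}^{-1}\,|\MM_{Fj}|$ and $|\vv^{(S)}| = |\vv_C| + |\MM_{CF}|\,\NN_{FF}^{-1}\,|\vv_F|$, and the four blocks of the block-inverse formula become sums of products of nonnegative matrices once the minus signs on $\NN_{FC}$ and $\NN_{CF}$ are folded in. Consequently every scalar arithmetic operation acts on positive floating-point inputs and preserves the $\approxbar$ relation up to a factor of $(1+2^{-L'})$ at internal precision $L'$.

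Finally I would bookkeep errors and bit operations. With $L' = L + \Theta(\log n + \log(1/\epsilon))$, each sum of $O(n)$ nonnegative terms degrades the relative error by $1 + O(n \cdot 2^{-L'})$, and composing across recursion depth $O(\log n)$ keeps the final entrywise relative error below $\epsilon$. The arithmetic cost obeys $T(n) = 2T(n/2) + O(n^3)$ --- the $n^3$ coming from the $(n/2)\times(n/2)$ matrix products that form the Schur complement and assemble the block inverse --- giving $O(n^3)$ floating-point operations at $\Otil(L + \log(1/\epsilon))$ bit operations each, for the claimed $\Otil(n^3(L + \log(1/\epsilon)))$ total. The hard part will be the no-cancellation invariant together with propagating invertibility: keeping $(\MM,\vv)$ separate at every level --- rather than forming $\NN$ explicitly via $\NN_{ii} = -\vv_i - \sum_{j\neq i}\MM_{ij}$, which is cancellation-prone --- is essential for preserving positivity of every diagonal entry, and one must separately verify that both subproblem excess vectors inherit a strictly negative coordinate from $\vv$ whenever $\NN$ is invertible, so that the theorem's hypotheses propagate to each recursive call.
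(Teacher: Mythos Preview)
Your algorithm and its sign-tracking are exactly the paper's: split $[n]=F\cup C$, recurse on $(\MM_{FF},\,\vv_F+\MM_{FC}\mathbf{1})$, form the Schur data $(\MM^{(S)},\vv^{(S)})$ without cancellation, recurse again, and assemble via the block-inverse formula. The structural invariants you list (nonpositivity of the new excess vectors, nonnegativity of $\NN_{FF}^{-1}$, same-sign summands throughout) are correct and match the paper.

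The gap is in the error bookkeeping. You write that the second recursive call ``obtains an entrywise approximation of $(\NN/\NN_{FF})^{-1}$,'' but by your own induction hypothesis it only returns an entrywise approximation of $(\NN^{(S)})^{-1}$, where $\NN^{(S)}$ is the matrix defined by the \emph{computed} pair $(\MM^{(S)},\vv^{(S)})$---which already carries the error from $\widetilde{\YY}\approxbar\NN_{FF}^{-1}$. To finish you must bound how an entrywise perturbation of an RDDL matrix perturbs the entries of its inverse; your no-cancellation argument controls only the per-gate rounding error, not this propagation. If you try to absorb it into the circuit analysis directly, the second recursion's input error is (first recursion's output error), and the amplification factor squares at each level, giving $2^{\Theta(n)}$ rather than $n^{O(\log n)}$---so your stated precision $L'=L+\Theta(\log n+\log(1/\epsilon))$ is not justified.

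The paper closes this gap with a separate perturbation lemma: if two invertible RDDL matrices agree entrywise and in row sums up to $e^{\pm\epsilon}$, then their inverses agree entrywise up to $e^{\pm 2n\epsilon}$. The proof goes through the matrix-tree theorem---each entry of the inverse is a ratio of weighted spanning-tree sums, each summand a product of at most $n$ edge weights, so an $e^{\pm\epsilon}$ perturbation of every weight moves each summand by at most $e^{\pm n\epsilon}$. This gives only a factor-$O(n)$ blowup per recursion level, hence $n^{O(\log n)}$ overall and $O(\log^2 n+\log(1/\epsilon))$ bits at the leaves. The final $\Otil$ bound survives, but the argument needs this lemma (or an equivalent structural bound) and the corresponding $\mathrm{poly}(n)$ shrinkage of the tolerance at each level.
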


Note that our notion of \emph{approximate} inverse in \cref{thm:main-recursion} is stronger than usual numerical linear algebraic guarantees that bound the error norm-wise (not entry-wise). 
As we will discuss in the next section, inversion using fast-matrix-multiplication is not capable of providing such entry-wise guarantees. 
Finally, we note that the purpose of the vector $\vv$ in \Cref{thm:main-recursion} is to guarantee that matrix $\NN$ is row diagonally dominant since even checking row diagonal dominance using floating-point numbers is not possible --- we cannot check whether the sum of $n$ numbers is equal to zero or just very close to zero.

The rest of the paper is organized as follows. In \Cref{sec:discussion-and-related}, we discuss other approaches that can be used for computing escape probabilities and why they result in a higher number of bit operations when we want multiplicative error factors. We discuss the notation and preliminaries required for the rest of the paper in \Cref{sec:prelim}. We discuss the linear system we need to solve for computing escape probabilities in \Cref{sec:linear-system}. \Cref{sec:repeated-square} discusses a simple algorithm based on repeated squaring that works for graphs with edge weights in a polynomial range. We believe this algorithm might be especially of practical interest. Finally, in \Cref{sec:recursive-SC}, we prove our main result using a recursive algorithm that uses Schur complement techniques.
%\todo{Add the organization of the paper.}

%\subsection{Techniques}

\iffalse
Here, $p$ is also known as the sink node. Throughout this paper, we use $p$ and $\sink$ interchangeably. Without loss of generality, we assume $s,t,p$ are in the same connected component of $G$. Even for the undirected unweighted case, the escape probability for certain tuples can still be exponentially small. Consider a graph $G=([n],E)$, where $p=1$, $s=2$, $t=n$, and $E=\{(1,i):i\in[n]\setminus\{1\}\} \cup \{(i,i+1):i\in[n-1]\setminus\{1\}\}$. Then the escape probability of $(s,t,p)$ is exponentially small. \todo{I have checked this with Matlab and it is correct, but we need a proof.} 
\fi

\subsection{Discussion and Related Work}
\label{sec:discussion-and-related}

In this section, we discuss different approaches in the literature that can be used to compute the escape probability in a graph. As discussed in \Cref{sec:linear-system}, one can compute the escape probability by solving a linear system. Therefore, different approaches, including Laplacian solvers, fast-matrix-multiplication, and iterative methods, can be used to compute the escape probability. However, we argue that all of these approaches result in a larger running time (in terms of the number of bit operations) for computing (exponentially) small escape probabilities. Such small probabilities are illustrated in the following example.

\begin{example}
\label{example:bad-example}
In the graph illustrated in \Cref{fig:bad-example}, the probability that a random walk starting at $s$ will hit $t$ before $p$ is exponentially small in the number of vertices. The reason is that any walk that does not hit $p$ should only take edges on the path between $s$ and $t$. To see this note that any walk reaching from $s$ to $t$ should traverse at least $n-2$ edges. Therefore the probability that we are still on the $s$-$t$ path after $n-2$ steps is an upper bound for the escape probability. Since with probability at least $1/3$ in each step, we exit the path, the probability that we are still on the path after $n-2$ steps is at most
\[
\sum_{i=n-2}^{\infty} \left(\frac{2}{3}\right)^{i} = 2 \cdot \left( \frac{2}{3} \right)^{n-3}.
\]

\begin{figure}[h]
    \centering
    \begin{tikzpicture}
    % Define nodes
    \node (hub) at (0, 1) [circle, draw, fill=red!20] {$p$};
    \node (n1) at (-3, 0) [circle, draw, fill=blue!20] {$s$};
    \node (n2) at (-1.5, 0) [circle, draw, fill=blue!20] {};
    \node (n3) at (0, 0) [circle, draw, fill=blue!20] {};
    \node (n4) at (1.5, 0) [circle, draw, fill=blue!20] {};
    \node (n5) at (3, 0) [circle, draw, fill=blue!20] {$t$};
    \node (n6) at (0.75, 0) [] {$\cdots$};

    % Draw path edges
    \draw (n1) -- (n2);
    \draw (n2) -- (n3);
    %\draw (n3) -- (n4);
    \draw (n4) -- (n5);
    \draw (n3) -- (0.4, 0);
    \draw (1.1, 0) -- (n4);

    % Draw connections to the hub
    \draw (hub) -- (n1);
    \draw (hub) -- (n2);
    \draw (hub) -- (n3);
    \draw (hub) -- (n4);
    \draw (hub) -- (n5);
\end{tikzpicture}  
    \caption{Exponentially small escape probability.}
    \label{fig:bad-example}
\end{figure}
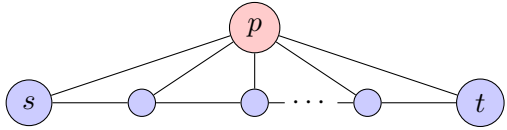
  
\end{example}

\paragraph{Near-linear-time Laplacian solvers.} 
Spielman and Teng \cite{ST04:journal} have shown that for a symmetric diagonally dominant matrix $\LL$, one can approximately solve the linear system $\LL \xx = \bb$ with $O(m \log^c n \log \epsilon^{-1})$ arithmetic operations on numbers with $O(\log(\kappa(\LL)) \log^c n \log \epsilon^{-1})$ bits of precision. Here, $\kappa(\LL)$ is the condition number of $\LL$ and $c$ is a constant. More specifically, their algorithm outputs $\widetilde{\xx}$ such that $\norm{\widetilde{\xx} - \LL^{\dagger} \bb}_{\LL} \leq \epsilon \cdot \norm{\LL^{\dagger} \bb}_{\LL}$, where $\LL^{\dagger}$ is the pseudo-inverse of $\LL$.
By taking $\epsilon'$ to be smaller than $\epsilon/(\kappa(\LL))^2$ and orthogonalizing against the all-$1$s vector on each connected component,
we get $\norm{\widetilde{\xx} - \LL^{\dagger} \bb}_{2} \leq \epsilon' \cdot \norm{\LL^{\dagger} \bb}_{2}$.
This bounds the norm-wise error of the computed vector of escape probabilities. The next example shows why norm-wise error bounds are not suited for computing small escape probabilities.

\begin{example}
\label{ex:norm-wise-bad}
Let $\vv=(10^{15},1)$, $\widetilde{\vv}=(10^{15}(1+\epsilon),0)$, and $\epsilon=10^{-5}$. Note that the second entry of $\vv$ and $\widetilde{\vv}$ are very different from each other.
Then
\begin{align}
\frac{\norm{\widetilde{\vv} - \vv}_1}{\norm{\vv}_1} = \frac{\epsilon\cdot 10^{15}+1}{10^{15}+1} = \frac{10^{10}+1}{10^{15}+1} \approx 10^{-5}.
\end{align}
This means that $\widetilde{\vv}$ approximates $\vv$ in a norm-wise manner, but entry-wise $\vv$ and $\widetilde{\vv}$ are very different vectors.
Therefore a bound on the norm does not necessarily provide bounds for the entries. For the norm bound to give guarantees for the entries, we would need to have $\epsilon \approx \frac{\vv_{\min}}{\vv_{\max}}$, where $\vv_{\min}$ and $\vv_{\max}$ are the smallest and largest entry of $\vv$ in terms of absolute value, respectively.
\end{example}

Now note that for the graph in \Cref{example:bad-example}, we have to take the error parameter $\epsilon$ to be exponentially small (in $n$) to be able to find a multiplicative approximation to the smallest escape probability. This means that the total number of bit operations for the Spielman-Teng algorithm will be $\Otil(m n^2)$. Later works such as \cite{KGP14} that improve the constant $c$ and algorithms for directed Laplacians
\cite{CKPPSV16,CKPPRSV17} all have the same dependencies on $\log(1/\epsilon)$ and similar norm-wise gurantees.
%compute the escape probabilities with additive error which is not desirable when the escape probabilities are very small. Essentially these would give $n^4$ running time as opposed to our $n^3$.

\paragraph{Fast-matrix-multiplication.}

Strassen \cite{S69} has shown that two $n$-by-$n$ matrices can be multiplied with fewer than $n^3$ arithmetic operations. Currently, the best bound for the number of arithmetic operations for fast-matrix-multiplication is $O(n^{\omega})$, where $\omega < 2.372$ \cite{WXXZ24} which is based on techniques from \cite{CW90}. It is also well-known that matrix inversion can be reduced to polylogarithmic matrix multiplications. Therefore a linear system can be solved in $O(n^{\omega})$ arithmetic operations.

The stability of Strassen's algorithm and other fast-matrix-multiplication algorithms have been a topic of debate for decades \cite{M75,BLS91,H90}. Although it is established that such algorithms are stable \cite{DDHK07,DDH07}, similar to near-linear-time Laplacian solvers, such stability only holds in a norm-wise manner. In other words, fast-matrix-multiplication algorithms produce errors that can only be bounded norm-wise. This is observed by Higham \cite{H90}, and he provides an explicit example: 

\begin{align}
C = \begin{bmatrix}
1 & 0 \\ 0 & 1
\end{bmatrix}
\begin{bmatrix}
1 & \epsilon \\ \epsilon & \epsilon^2
\end{bmatrix}.
\end{align}

As Higham \cite{H90} shows, Strassen's algorithm fails in computing $C_{22}$ accurately, and he states that ``to summarize, Strassen’s method has less favorable stability properties than
conventional multiplication in two respects: it satisfies a weaker error bound
(norm-wise rather than component-wise)\ldots The norm-wise bound is a consequence
of the fact that Strassen’s method adds together elements of $\AA$ matrix-wide (and
similarly for $\BB$).''

As Higham states, this issue arises from the subtractions (i.e., adding positive and negative numbers together) in Strassen's algorithm. ``Another interesting property of Strassen’s method is that it always involves
some genuine subtractions (assuming that all additions are of nonzero terms).
This is easily deduced from the formulas (2.2). As noted in \cite{GL89}, this makes
Strassen’s method unattractive in applications where all the elements of $\AA$
and $\BB$ are nonnegative (for example, in Markov processes \cite{H87}). Here, conventional multiplication yields low relative error component-wise because in (4.2)
$| \AA | | \BB | = |\AA \BB| = |\CC|$, yet comparable accuracy cannot be guaranteed for
Strassen’s method.''

Therefore for any algorithm based on fast-matrix-multiplication, we need to take the error parameter exponentially small in $n$. This gives $n^{\omega+1}$ bit operations which is worse than $n^3$.

\paragraph{Shifted and $p$-adic numbers.} 
Another approach for solving linear systems is to use $p$-adic numbers
\cite{S05,D82}.
The advantage of this approach that relies on Cramer's rule is that the solution is computed exactly in rational number representation. In addition, the running time of the algorithms do not have a dependence on error parameters or the condition number of the matrix. However, such algorithms only work with integer input matrices and vectors. In the escape probability problem, either the input matrix or the input vector has to be rational (not integer). Naive ways of rounding such inputs to integers would result in either increasing the bit complexity of the input numbers or giving norm-wise errors (which again would result in algorithms with a running time of $O(n^{\omega+1})$). We believe these algorithms can be adopted to give running times of $O(n^{\omega} L)$, where $L$ is the bit complexity of input numbers in fixed-point representation when the input numbers are rational instead of integer. However, this does not follow immediately. Moreover, our results in \Cref{thm:main-recursion} can handle input numbers that are exponentially large or small in fixed-point representation (but have small floating-point representation). Such input numbers would result in a running time of $O(n^{\omega+1})$ for approaches based on $p$-adic numbers, which is worse than $n^3$. Adopting these algorithms for floating-point numbers, if possible, would require significant modifications.

\iffalse
\mg{@rpeng it is mentioned that one can use Storjohann for escape prob computation in $O(n^{\omega})$ time. This is not clear to me that one can use that. Storjohann works for integer matrix and vector but the matrix in the escape probability problem is a rational matrix (i.e., division by degrees). I'm not sure how this interact with Storjohann and this sort of reminds me of the issue we ran into with Storjohann with IPM. I remember the way I was convinced that doesn't work is that solving weighted Laplacian is harder than unweighted Laplacian. Like one idea could be to just round the rational numbers and then take the inverse but that gives condition number dependence I think and norm-wise error. The other approach could be to take a common denominator (for the reciprocal of degrees) and multiply by that, but that can also add an extra factor of $n$.}
\rp{I think this is ok to have as the `here is the problem' paragraph?
Perhaps after stating the main theorem?
S05 is for integers, extending it to go over Q will take some effort,
and going further to say floats, probably requires significant modification: the number of $p$-adic digits would be $2^{n}$ since the exponent has $n$ digits...} \mg{You mean exponent has $\log n$ digits?}
\fi

\paragraph{Gaussian elimination.}

There are empirical studies in the literature that show Gaussian elimination is more stable than other approaches for computing stationary distribution of Markov chains \cite{H87,GTH85,HP84}. It is observed and stated that the process is more stable because it does not require subtractions in this case. This is very similar to our ideas for proving \Cref{thm:main-recursion} using Schur complements. However, to the best of our knowledge, these approaches have not been theoretically studied before our work.

%(I think this is essentially the Schur complement idea). It is also observed the algorithms that do not require subtraction are more stable (this can be explained by condition number in Kahan summation as well). However, they do not provide a theoretical analysis (for either norm-wise or entry-wise error). 
%Growth factor of column diagonally dominant $M$-matrices are equal to one \cite{FNP82}. \cite{W61} discusses growth factor.

\paragraph{Bit complexity.} There are many recent works that study the bit complexity of algorithms for linear algebraic primitives, such as diagonalization \cite{S23,BVKS23,BVS22a,BVS22b,DKRS23} and optimization \cite{GPV23,G23,ABGZ24,GLPSWY24}. All of these works provide norm-wise error bounds.
%An example of sophisticated algorithms to save precision is the Kahan summation algorithm \cite{K65,H93}. %\mg{Here I think is good to mention that floating points are haaard. Like even summation is not trivial. Hence Kahan...}

\section{Preliminaries}
\label{sec:prelim}

%\todo{what's the name of a RDD matrix with non-positive off-diagonals?}
%\rp{It's not directed Laplacians either since those are $0$ sums.} \mg{RDDL (row diagonally dominant $L$-matrix) captures this, but I'm going to look a bit at literature to see if there is a standard name.} 

Given a directed weighted graph $G=(V,E, w)$ with $w\in\R_{\geq 0}^{E}$, a random walk in the graph picks the next step independent of all the previous steps. We denote the neighbors of vertex $v$ with $N(v)$. Then if the random walk is at vertex $v$, in the next step it goes to $u\in N(v)$ with probability $\frac{w(v,u)}{\sum_{y \in N(v)} w(v,y)}$. In other words, we consider the Markov chain associated with the graph. Note that for the special case of unweighted graphs the probability for each neighbor is equal to $1/|N(v)|$.
Then the escape probability is defined as the following.

% \rp{are you sure this doesn't have a proper name?} \jb{I tried to find something to cite, haven't got any.}
%as follows:
\begin{definition}[Escape Probability]
    The escape probability $\P(s,t,p)$ denotes the probability of hitting the vertex $t$ before $p$ among all the possible random walks starting at $s$.
\end{definition}
Note that there is some symmetry associated with escape probability: $\P(s,t,p) = 1- \P(s,p,t)$. However due to issues that floating points introduce (that we discuss later in this section), it is not advisable to compute $\P(s,t,p)$ from $\P(s,p,t)$ in this way since it involves adding a negative number to a positive number (i.e., subtraction).

We denote the number of vertices and edges with $n$ and $m$, respectively. We use bold small letter and bold capital letters to denote vectors and matrices, respectively. The vector of all ones is denoted with $\boldsymbol{1}$ and the $i$'th standard basis vector is denoted by $\boldsymbol{e}_i$. We do not explicitly show the size of these vectors, but it will be clear from the context throughout the paper. We use $\Otil$ notation to omit polylogarithmic factors in $n$ and $L$ and polyloglog factors in $1/\epsilon$. In other words, $\Otil(f) = O(f \cdot \log(nL\cdot \log(\frac{1}{\epsilon})))$. We rely extensively on the following notion of approximation for scalars and matrices. 

\begin{definition}
For two nonnegative scalars $a$ and $b$, and $\epsilon\geq 1$,
we denote $a \approx_{\epsilon} b$ if
\begin{align*}
e^{-\epsilon} \cdot a
\leq b
\leq e^{\epsilon} \cdot a
\end{align*}
For two matrices $A$ and $B$ of same size, we denote
%\begin{align}
$
\AA
\approxbar_{\epsilon}
\BB
$
%\end{align}
if $\AA_{ij} \approx_{\epsilon} \BB_{ij}$
for all $1 \leq i, j \leq n$.
\end{definition}
Note that for nonnegative numbers $a,b,c$, and $d$ if $a \approx_{\epsilon_1} b$ and $b \approx_{\epsilon_2} c$, then $a \approx_{\epsilon_1 + \epsilon_2} c$, and if $a\approx_{\epsilon} c$ and $b\approx_{\epsilon} d$, then $a+b\approx_{\epsilon} c+d$. We use this strong notion approximation for the inversion of a special class of matrices called RDDL in \Cref{sec:recursive-SC}.

\begin{definition}[RDDL]
\label{def:rddl}
    $\MM\in \R^{n\times n}$ is an $L$-matrix if for all $i\neq j$, $\MM_{ij} \leq 0$, and for all $i\in [n]$, $\MM_{ii} > 0$.
    $\MM$ is row diagonally dominant (RDD) if for all $i\in[n]$, $|\MM_{ii}| \geq \sum_{j\in [n] \setminus\{i\}} |\MM_{ij}|$. $\MM$ is RDDL if it is both an $L$-matrix and RDD.
\end{definition}

We can use an RDDL matrix to show the probabilities associated with a random walk on a weighted directed graph. Setting the set of vertices to $[n]:=\{1,\ldots,n\}$, for $i,j \in [n]$ with $i\neq j$ , if $i$ is connected to $j$, then $\MM_{ij} = -\frac{w(i,j)}{\sum_{k \in N(i)} w(i,k)}$, and $\MM_{ij} = 0$, otherwise. Also, we set $\MM_{ii} = \sum_{j\in [n] \setminus \{i\}}$. Note that if we remove a row and the corresponding column from an RDDL matrix, the matrix stays RDDL. Throughout the paper, we use $V$ and $[n]$ for the set of vertices of the graph interchangeably.

For an RDDL matrix $\MM \in \R^{n \times n}$, we define the corresponding matrix $G = ([n+1], E)$. $n+1$ is a dummy vertex that we add. For any $i,j \in [n]$ with $i\neq j$, the weight of edge $(i,j)$ is $-\MM_{ij}$ in the graph. Moreover, the weight of edge $(i, n+1)$ is equal to $\sum_{j \in [n]} \MM_{ij}$, and the weight of edge $(n+1,i)$ is equal to zero. If in such a graph, for all vertices $i\in[n]$, there is a path from $i$ to $n+1$ with all positive weight edges, then we say $G$ (the graph corresponding to the RDDL matrix) is \emph{connected to the dummy vertex}.

The $\ell_1$-norm and $\ell_{\infty}$ norm of a vector $\vv \in \R^{n}$ are $\norm{\vv}_1 := \sum_{i\in [n]} |\vv_i|$ and $\norm{\vv}_{\infty} := \max_{i\in [n]} |\vv_i|$ respectively. Then the induced $\ell_1$ and $\ell_{\infty}$ norm is defined as the following for a matrix $\MM \in \R^{n\times n}$.
\[
\norm{\MM}_1 := \max_{\vv \in \R^{n}: \norm{\vv}_1 = 1} \norm{\MM \vv}_1, \text{ and } \norm{\MM}_{\infty} := \max_{\vv \in \R^{n}: \norm{\vv}_{\infty} = 1} \norm{\MM \vv}_{\infty}
\]
One can easily see that both the norms and the induced norms satisfy triangle inequality and consistency, i.e., $\norm{\AA \BB}_{\infty} \leq \norm{\AA}_{\infty} \cdot \norm{\BB}_{\infty}$. The spectral radius of matrix $\MM$ is denoted by $\rho(\MM) := \max\{|\lambda_1|,\ldots,|\lambda_n|\}$, where $\lambda_i$'s are the eigenvalues of $\MM$.

\paragraph{Schur complement.} Given a block matrix
\[
\MM = \begin{bmatrix}
    \AA & \BB \\ \CC & \DD
\end{bmatrix},
\]
we denote the set of leading indices with $F$ and the rest of indices are denoted with $C$, i.e., $\MM_{FF} = \AA$, and $\MM_{CC} = \DD$. 
Then the Schur complement of $\MM$ with respect to indices $C$ is $\sc(\MM,C) = \DD - \CC \AA^{-1} \BB$. 
Then if $\AA$ and $\SS := \sc(\MM,C)$ are invertible, $\MM$ is invertible and its inverse is the following.
\begin{align}
\label{eq:sc-inversion}
\MM^{-1} = \begin{bmatrix}
    \AA^{-1} + \AA^{-1} \BB \SS^{-1} \CC \AA^{-1} & -\AA^{-1} \BB \SS^{-1} \\
    - \SS^{-1} \CC \AA^{-1} & \SS^{-1}
\end{bmatrix}.
\end{align}

\paragraph{Floating-point numbers.} A floating point number in base $t$ is stored using two integer scalars $a$ and $b$ as $a \cdot t^{b}$. 
Scalar $a$ is the significand and scalar $b$ is the exponent. 
The most usual choice of the base on real computers is $t=2$. 
The number of bits of $a$ determines the bits of precision and the number of bits of $b$ determines how large or small our numbers can be. 
Let $L$ be the number of bits that we allow for $a$ and $b$, then for any number $c \in [t^{- 2^{L}+1},(2^{L} - 1) \cdot t^{2^{L} - 1}]$, there is a number $d$ in floating-point numbers with $L$ bits such that $d \leq c \leq (1 + t^{- 2^{L-1}}) d$. 
Therefore if our numbers are not too large or too small, we can approximate them using $O(\log(1/\epsilon))$ bits to $e^{\epsilon}$ multiplicative error. 
Also, note that we can use floating-point numbers to show \emph{zero} exactly. 
Using $\Otil(\log(1/\epsilon))$ bit operations, by fast Fourier transform (FFT), we can perform the multiplication of two numbers to $e^{\epsilon}$ accuracy. 
The sign of the result will be correct. Moreover with $O(\log(1/\epsilon))$ bit operations, we can perform an addition of two nonnegative numbers (or two nonpositive numbers) with a multiplicative error of $e^{\epsilon}$. 
However, if we add a positive number to a negative number, the error will be additive and depends on $\epsilon$ --- see Kahan's summation algorithm \cite{K65}. 
That is why we do not perform the addition of positive and negative numbers in our algorithms in this paper.

\section{Matrix Associated with Escape Probability}
\label{sec:linear-system}
In this section, we characterize the linear system corresponding to the computation of escape probabilities and discuss the invertibility of RDDL matrices arising from such linear systems. In \Cref{sec:sc-view-for-escape-prob}, we observe that the linear system can be transformed into a system with two fewer equations and variables by looking at the inverse via Schur complement.

%Now consider a matrix $\widetilde{\AA}$ where $\widetilde{\AA}_{ij}$ equal the probability $\pr(i,j)$. 

\begin{lemma}
\label{lem:rddl-invertible}
Let $\MM \in \R^{n\times n}$ be an RDDL matrix for which the corresponding graph is connected to the dummy vertex. 
Then $\MM$ is invertible.
\end{lemma}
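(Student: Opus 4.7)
The plan is to factor $\MM$ into a positive diagonal part and a sub-stochastic part, reinterpret the sub-stochastic part as an honest random walk on the graph with the dummy vertex as an absorbing sink, and then use the connectivity hypothesis to show that mass strictly leaks out of every vertex, forcing the spectral radius of the sub-stochastic part to be less than one.

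Concretely, I would first write $\MM = \DD - \AA$, where $\DD = \diag(\MM_{11},\dots,\MM_{nn})$ has strictly positive diagonal (since $\MM$ is an $L$-matrix) and $\AA$ has $\AA_{ij} = -\MM_{ij} \ge 0$ off the diagonal and $\AA_{ii}=0$. Since $\DD$ is invertible, invertibility of $\MM$ is equivalent to invertibility of $\II - \PP$ where $\PP \defeq \DD^{-1}\AA$. By RDD, each row sum of $\PP$ satisfies
\[
(\PP \one)_i \;=\; \frac{\sum_{j\ne i}|\MM_{ij}|}{\MM_{ii}} \;\le\; 1,
\]
and the slack $1-(\PP\one)_i$ equals $(\sum_{j\in[n]} \MM_{ij})/\MM_{ii}$, which is exactly the weight of the dummy edge $(i,n{+}1)$ divided by $\MM_{ii}$. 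So $\PP$ is the transition matrix of the random walk obtained by treating the dummy vertex as absorbing.

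Next I would argue $\rho(\PP) < 1$. The connectedness-to-the-dummy hypothesis says that from every $i \in [n]$ there is a directed path $i = v_0 \to v_1 \to \cdots \to v_\ell = n{+}1$ of length $\ell \le n$ along which every edge has strictly positive weight; the last edge $(v_{\ell-1}, n{+}1)$ contributes a strict slack at row $v_{\ell-1}$, while the earlier edges correspond to strictly positive entries of $\PP$. Multiplying these together gives $(\PP^{\ell-1}\one)_i < 1$, i.e.\ some mass has escaped to the dummy within $\ell \le n$ steps. Because $\PP\one \le \one$ entrywise, the sequence $\PP^k \one$ is monotone non-increasing, so taking $k = n$ uniformly yields $\PP^n \one < \one$ strictly in every coordinate. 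Hence $\|\PP^n\|_\infty < 1$, which gives $\rho(\PP)^n = \rho(\PP^n) \le \|\PP^n\|_\infty < 1$, so $\rho(\PP) < 1$. In particular $1$ is not an eigenvalue of $\PP$, so $\II - \PP$ is invertible, and therefore $\MM = \DD(\II - \PP)$ is invertible.

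The only delicate point is the combinatorial step: translating the graph-theoretic hypothesis (positive-weight path to the dummy vertex within at most $n$ edges) into a strict inequality $(\PP^n \one)_i < 1$ for every $i$. The rest is routine: the factorization $\MM = \DD(\II-\PP)$ is immediate from the $L$-matrix structure, the row-sum bound is exactly the RDD condition, and passing from $\|\PP^n\|_\infty < 1$ to $\rho(\PP) < 1$ is standard via the spectral radius formula (or Gelfand's formula).
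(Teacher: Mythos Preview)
Your proposal is correct and follows essentially the same route as the paper: factor $\MM = \DD(\II - \PP)$ with $\PP = \DD^{-1}\AA$ sub-stochastic, interpret $\PP$ as the random walk with the dummy vertex absorbing, and use the positive-path-to-dummy hypothesis to force $\rho(\PP)<1$. Your treatment is in fact a bit more quantitative than the paper's (you extract an explicit bound $\|\PP^{n}\|_{\infty}<1$, whereas the paper just asserts $(\NN^{k})_{ij}\to 0$ and concludes $\rho(\NN)<1$), but the argument is the same; the only slip is an off-by-one in ``$(\PP^{\ell-1}\one)_i<1$'' (it should be $\PP^{\ell}$), which is harmless since you immediately pass to $k=n$.
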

\begin{proof}
    Let $\DD \in \R^{n\times n}$ be a diagonal matrix where for $i\in[n]$, $\DD_{ii} = \MM_{ii}$. 
    First note that $\MM$ does not have a row of all zeros since otherwise the corresponding vertex is an isolated vertex and the graph corresponding to $\MM$ is not connected to the dummy vertex. 
    Therefore all $\DD_{ii}$'s are nonzero and $\DD$ is invertible.
    Let $\NN = \II - \DD^{-1} \MM$ and $k\in\N$. 
    Then $(\NN^{k})_{ij}$ is the probability that a random walk of size $k$ that started at vertex $i$ ends at vertex $j$ without hitting the dummy vertex. 
    Since by assumption, for all vertices, there is a path of positive edge weights to the dummy vertex, as $k\to \infty$, $(\NN^{k})_{ij} \to 0$. 
    Therefore the spectral radius of $\NN$ is strictly less than one. 
    Thus $\DD^{-1} \MM = \II - \NN$ does not have a $0$ eigenvalue and is invertible. 
    Therefore, $\MM$ is also invertible.
\end{proof}

We now characterize the linear system that needs to be solved to compute escape probabilities in a graph. In the following, matrix $\AA$ is the matrix corresponding to the Markov chain (random walk) in the graph (i.e., $\AA_{ij}$ is the probability of going from vertex $i$ to vertex $j$) with the only difference that the rows corresponding to vertices $t$ and $p$ are zeroed out. This is to prevent the random walk to exit $t$ or $p$ when it arrives at one of them.

\iffalse
Note that if we zero out the probability of going out of $t$ and $p$, any random walk that reaches any of them stays there. 
Therefore, we make the rows associated with $t$ and $p$ equal to zero to obtain the matrix $\AA$. 
We now show that column $t$ and column $p$ of $(\II-\AA)^{-1}$ give the escape probability of $(s,t,p)$ and $(s,p,t)$, respectively, for any $s\in V$, i.e., $\P(s,t,p) = (\II-\AA)^{-1}_{st}$ and $\P(s,p,t) = (\II-\AA)^{-1}_{sp}$.
\fi

\begin{lemma}
\label{lem:escape-prob-matrix}
Let $\AA$ be the matrix obtained by zeroing out the rows corresponding to vertices $t$ and $p$ in the random walk associated with (directed or undirected and weighted or unweighted) graph $G=(V,E)$. If for all $v \in V$, there is a path from $v$ to $t$ or $p$ then $\II - \AA$ is invertible and $\P(s,t,p) = (\II-\AA)^{-1}_{st}$ and $\P(s,p,t) = (\II-\AA)^{-1}_{sp}$.
\end{lemma}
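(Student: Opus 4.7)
The plan is to interpret each entry of $\AA^k$ combinatorially as the probability of an event in the killed random walk, show that $\AA^k \to 0$ entry-wise (hence $\II - \AA$ is invertible with $(\II - \AA)^{-1} = \sum_{k \geq 0} \AA^k$ being a Neumann series), and then read off the escape probability from that series.

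First I would observe that because rows $t$ and $p$ of $\AA$ are identically zero, the only walks $s = v_0 \to v_1 \to \cdots \to v_k = j$ that contribute to $(\AA^k)_{sj}$ are those with $v_0, v_1, \ldots, v_{k-1} \notin \{t, p\}$. Consequently, for $s \notin \{t, p\}$, $(\AA^k)_{st}$ equals exactly the probability that the walk started at $s$ first hits $\{t, p\}$ at step $k$ and that first hit is the vertex $t$, and analogously for $p$. Summing over $k \geq 1$ partitions the event ``hit $t$ before $p$'' by the hitting time, giving $\sum_{k=1}^{\infty} (\AA^k)_{st} = \P(s, t, p)$. The edge cases $s = t$ and $s = p$ will be handled at the end by inspection.

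Next I would establish $\AA^k \to 0$. Because $G$ has every edge carrying positive transition probability and every vertex has a path to $\{t, p\}$ of length at most $n - 1$, there exists $\delta > 0$ such that from any $v \notin \{t, p\}$ the probability of reaching $\{t, p\}$ within $n - 1$ steps is at least $\delta$. Iterating this in blocks of $n - 1$ steps, the probability that a walk from any $v$ avoids $\{t, p\}$ for $k(n-1)$ steps is at most $(1-\delta)^k$. The key identity here is that, for $i \notin \{t, p\}$, the $i$-th row sum of $\AA^k$ is precisely the probability that the walk from $i$ has avoided $\{t, p\}$ in steps $0, 1, \ldots, k-1$ (for $i \in \{t, p\}$ the row is zero). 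Thus $\|\AA^k\|_\infty \to 0$, which implies $\rho(\AA) < 1$ and hence that $\II - \AA$ is invertible with $(\II - \AA)^{-1} = \sum_{k \geq 0} \AA^k$.

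Putting it together, $(\II - \AA)^{-1}_{st} = \mathbbm{1}[s = t] + \sum_{k \geq 1} (\AA^k)_{st}$. For $s \notin \{t, p\}$ this is the sum computed in the first step, namely $\P(s, t, p)$; for $s = t$ both sides equal $1$; and for $s = p$ the sum vanishes because row $p$ of $\AA$ is zero and $\P(p, t, p) = 0$. The same argument with $t$ and $p$ swapped gives $(\II - \AA)^{-1}_{sp} = \P(s, p, t)$. The main obstacle is the invertibility step: without the path hypothesis one could have a closed set of states disjoint from $\{t, p\}$, which would force $\rho(\AA) = 1$; the combinatorial bound on row sums via the path assumption is what rules this out.
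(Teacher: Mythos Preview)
Your argument is correct and follows the same overall scheme as the paper: expand $(\II-\AA)^{-1}$ as the Neumann series $\sum_{k\geq 0}\AA^k$, interpret $(\AA^k)_{st}$ as the probability that the killed walk first hits $\{t,p\}$ at time $k$ and lands on $t$, and sum over $k$.

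The one place where you and the paper diverge is the invertibility step. The paper does not bound $\|\AA^k\|_\infty$ directly on the full matrix. Instead it observes that, with rows $t$ and $p$ zeroed, $\II-\AA$ has the block form $\begin{bmatrix}\MM & \BB\\ \boldsymbol{0} & \II_2\end{bmatrix}$ (after permuting $t,p$ to the end), so $\sc(\II-\AA,[n]\setminus\{t,p\})=\MM$; it then appeals to \Cref{lem:rddl-invertible} to conclude that $\MM$ is invertible because the path hypothesis makes the corresponding RDDL graph connected to its dummy vertex. Your route is more elementary and self-contained: you extract a uniform lower bound $\delta>0$ on the $n{-}1$-step absorption probability and turn this into geometric decay of the row sums of $\AA^k$. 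Both arguments ultimately hinge on the same combinatorial fact (every vertex reaches $\{t,p\}$), but the paper's block decomposition is the one reused later in \Cref{sec:sc-view-for-escape-prob} to derive the reduced system $-\MM^{-1}\BB$, whereas your direct bound gives a cleaner standalone proof of the lemma.
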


\begin{proof} % [Proof of \Cref{lem:escape-prob-matrix}]

Let $\DD$ be the principal submatrix corresponding to $t$ and $p$ in $\II-\AA$ and $\MM$ be the principal submatrix corresponding to the rest of the vertices. Since the rows corresponding to $t$ and $p$ in $\AA$ are zero, it is easy to see that 
\[
\DD = \begin{bmatrix}
    1 & 0 \\
    0 & 1
\end{bmatrix}.
\]
and $\sc(\II-\AA, [n]\setminus\{t,p\}) = \MM$. Obviously, $\DD$ is invertible. Moreover, by assumption, $\MM$ is connected to the dummy vertex. The dummy vertex of $\MM$ is essentially the contraction of $t$ and $p$. Therefore by \Cref{lem:rddl-invertible}, $\MM$ is also invertible. Therefore since $\DD$ and the Schur complement are invertible, $\II-\AA$ is invertible.

\iffalse
First we show $(\II - \AA)$ is invertible and can be computed by 
% By \Cref{lemma:invertble}, $(\II-\AA)^{-1}$ exists when the graph is connected. Then, we interpret the inverse by using the power series.
% \jb{this is not just about $(\II-\AA)^{-1}$ exists?} \mg{yes it is} 
\begin{align}
\left(\II-\AA\right)^{-1} = \II + \AA + \AA^2 + \AA^3 + \cdots.
\end{align}
It suffices to show the sum of non-negative matrices in the RHS is converged by giving an upper bound on the infinity norm. By \Cref{lemma:2t}, there exists some $k$ that $\|\AA^{k}\|_\infty \le 1/2$. Let $m$ denote $\max_{i=0}^{k-1} \|\AA^{i}\|_\infty$, we can bound
\begin{align}
    \|\II + \AA + \AA^2 + \AA^3 + \cdots\|_\infty 
    & \le 
    \sum_{i=0}^{\infty} \| \AA^i \|_\infty 
    \\ & \le 
    \sum_{i=0}^{\infty} \sum_{j=0}^{k-1} \| \AA^{ki}\|_\infty \| \AA^{j}\|_\infty  
    \\ & \le 
    \sum_{i=0}^{\infty} \sum_{j=0}^{k-1} \frac{m}{2^i}  
     \le 
    2km.
\end{align}
\fi
Then, we prove the theorem by induction. First, note that for each vertex $i \in [n] \setminus \{t,p\}$, the corresponding row of $\AA$ is a probability distribution over the endpoints of random walks of size one that starts at vertex $i$. Now consider entry $i,j$ of $\AA^2$. We have 
\begin{align*}
\left(\AA^2\right)_{ij} = \sum_{k=1}^n \AA_{ik} \AA_{kj}. 
\end{align*}
Note that $\AA_{ik}$ is the probability of going from vertex $i$ to vertex $k$ in the first step of the random walk and $\AA_{kj}$ is the probability of going from $k$ to $j$ in the second step. Therefore $(\AA^2)_{ij}$ is the probability of ending up at vertex $j$ with a random walk of size $2$ that starts at $i$. Continuing this argument inductively, we can argue that $(\AA^k)_{ij}$ denotes the probability of ending up at vertex $j$ with a random walk of size $k$ that starts at $i$.

Since $\II-\AA$ is invertible, $\left(\II-\AA\right)^{-1} = \II + \AA + \AA^2 + \cdots$. Therefore
\begin{align*}
\left(\II-\AA\right)^{-1}_{st} = \II_{st} + \AA_{st} + \left(\AA^2\right)_{st} + \left(\AA^3\right)_{st} + \cdots
\end{align*}
Consider a random walk $v_1,v_2,v_3,\ldots,v_{k+1}$, where $v_1=s$ and $v_{k+1} = t$. Note that
\begin{align*}
\left(\AA^k\right)_{st} & = \sum_{j_{k-1}=1}^n \left(\AA^{k-1}\right)_{s j_{k-1}} \AA_{j_{k-1} t} = \sum_{j_{k-1}=1}^n \sum_{j_{k-2}=1}^n \left(\AA^{k-2}\right)_{s j_{k-2}} \AA_{j_{k-1} j_{k-2}} \AA_{j_{k-1} t}
\\ & =
\sum_{j_{k-1}=1}^n \sum_{j_{k-2}=1}^n \cdots \sum_{j_{1}=1}^n \AA_{sj_1} \cdots \AA_{j_{k-3} j_{k-2}} \AA_{j_{k-1} j_{k-2}} \AA_{j_{k-1} t}
\end{align*}
Now, note that $\AA_{v_1 v_2} \AA_{v_2 v_3} \cdots \AA_{v_k v_{k+1}}$ is a summand of the above summation. Therefore the probability of random walk $v_1,v_2,v_3,\ldots,v_{k+1}$ is counted in $\AA^k$. Now we argue that it is counted in only one of $\AA^i$'s. Suppose for $i\in[k]\setminus\{1\}$, $v_i$ is equal to $t$ or $p$. In this case since $\AA_{tj}=\AA_{pj}=0$ for any $j\in[n]$, then $\AA_{v_i v_{i+1}}=0$ and therefore the probability of this random walk is equal to zero. In other words, any random walk that visits either $t$ or $p$ is only counted in the $\AA^i$ corresponding to the first visit to $t$ or $p$. Another view on this is that when a random walk visits $t$ or $p$, it stays there with probability of one for all the rest of the steps. This concludes the proof. 
\end{proof}

By \Cref{lem:escape-prob-matrix}, one can see that it is enough to solve the linear system $\left(\II - \AA\right) \xx = \boldsymbol{e}_{t}$ to compute $\P(s,t,p)$ and the solution gives the probabilities for all $s\in V$. Note that the solution to this linear system is column $t$ of the inverse of $\II-\AA$. In the next section, we identify another linear system that gives the escape probabilities.

\subsection{Schur Complement View}
\label{sec:sc-view-for-escape-prob}
% \rp{this feels really short...
% is it possile to combine it, or put in another subset?}

Here we take a more careful look at the columns corresponding to $t$ and $p$ in the inverse of $\II-\AA$ with the help of Schur complement.
Without loss of generality suppose $t=n-1$ and $p=n$. Then matrix $\II-\AA$ is as the following.
\begin{align*}
%\label{eq:almost-laplacian}
\II - \AA = \begin{bmatrix}
\MM & \BB \\
\CC & \DD
\end{bmatrix},
\end{align*}
where 
\begin{align*}
\DD = \begin{bmatrix}
    1 & 0 \\
    0 & 1
\end{bmatrix},
\end{align*}
and $\CC$ is a $2$-by-$(n-2)$ matrix of all zeros. Then by \eqref{eq:sc-inversion}, we have
\begin{align*}
\left(\II-\AA\right)^{-1}_{:,n-1:n} = \begin{bmatrix}
- \left(\MM - \BB \DD^{-1} \CC\right)^{-1} \BB \DD^{-1}
\\
\DD^{-1} + \DD^{-1} \CC \left(\MM - \BB \DD^{-1} \CC\right)^{-1} \BB\DD^{-1}
\end{bmatrix}
\end{align*}
Therefore noting that $\DD$ is the identity matrix and $\CC$ is an all zero matrix, we have
\begin{align}
\label{eq:escape-formula}
\left(\II-\AA\right)^{-1}_{:,(n-1):n} = 
\begin{bmatrix}
- \MM^{-1} \BB
\\
\DD
\end{bmatrix}
\end{align}
Therefore computing $- \MM^{-1} \BB$, which corresponds to solving two linear systems, gives both escape probabilities $\P(s,t,p)$ and $\P(s,p,t)$ for $s\in V \setminus \{t,p\}$. Also, note that $\P(t,t,p) = \P(p,p,t) = 1$, and $\P(p,t,p) = \P(t,p,t) = 0$.

\section{Repeated Squaring}
\label{sec:repeated-square}

As mentioned in the proof of \Cref{lem:escape-prob-matrix}, if $\II - \AA$ is invertible, the inverse can be computed by the means of the power series $\II + \AA + \AA^{2} + \cdots$. 
However, to design an algorithm from this, we need to cut the power series 
 and take the summation for a finite number of terms in the power series. 
 In other words, we need to output $\II + \AA + \AA^{2} + \cdots + \AA^{k}$ (for some $k\in\N$) as an approximate inverse. 
 Then such an approximate inverse can be computed efficiently by utilizing a repeated squaring approach. 
 
 In this section, we give bounds for the required number of terms to produce appropriate approximations. 
 A disadvantage of this approach compared to the recursion approach that we discuss in the next section is that the running time would have a dependence on the logarithm of hitting times to $t$ and $p$.

 We start by formally defining the set of random walks that hit a certain vertex only in their last step and use that to define hitting times. We then show that such hitting times characterize how fast the terms in the power series decay. Finally, we use that to bound the number of terms needed to approximate the inverse well.

%Now we bound the number of terms required in the power series to give an accurate entrywise approximation to the inverse.

%\mg{please somebody check the following and formalize more.}

\begin{definition}
\label{def:hitting-time}
    For a (directed or undirected and weighted or unweighted) graph $G=(V,E)$ and $s,t\in V$, we define the hitting time of $s$ to $t$, as the expected number of steps for a random walk starting from $s$ to reach $t$ for the first time. 
    More formally, let 
    \begin{align}
        W_{st}=\{w=(v_1,\ldots,v_k): k\in\N, \forall i\in [k], \forall j\in[k-1], v_i\in V, (v_j,v_{j+1}) \in E,v_j\neq t, v_1=s, v_k = t\},
    \end{align}
    be the set of all possible walks from $s$ to $t$ that visit $t$ only once. We set the probability of the random walk $w=(v_1,\ldots,v_k)$ equal to $\pr(w):= \pr(v_1,v_2) \pr(v_2,v_3) \cdots \pr(v_{k-1},v_k)$, where $\pr(u,v)$ is the probability of going from vertex $u$ to $v$ in one step. 
    We also denote the size of $w$ with $|w|$ which is the number of edges traversed in $w$, i.e., for $w=(v_1,\ldots,v_k)$, $|w|=k-1$. Then the hitting time of $s$ to $t$ is
    \begin{align}
        H(s,t):=\sum_{w\in W_{st}} \pr(w) \cdot |w|.
    \end{align}
    If there is no path from $s$ to $t$, i.e., $W_{st}=\emptyset$, then $H(s,t)=\infty$.
\end{definition}

The way \Cref{def:hitting-time} defines the hitting time, it only works for one vertex. 
To define the hitting time for hitting either $t$ or $p$, we can consider the graph $\overline{G}$ obtained from $G$ by contracting $t$ and $p$. 
We denote the vertex corresponding to the contraction of $t$ and $p$ with $q$.
Then the probability of going from $s$ to $q$ (in one step) in $\overline{G}$ denoted by $\overline{\pr}(s,q)$ is equal to $\pr(s,t)+\pr(s,p)$. 
Then the hitting time of $s$ to $q$ in $\overline{G}$ can be defined in the same manner as \Cref{def:hitting-time} and we denote it with $\overline{H}(s,q)$. 
The following lemma bounds the decay of the terms of the power series using this hitting time.

\begin{lemma}
\label{lemma:2t}
For a (directed or undirected and weighted or unweighted) graph $G=(V,E)$ and $s,t,p\in V$, let $\overline{G} =(\overline{V},\overline{E})$ be the graph obtained by contracting (identifying) $t$ and $p$. 
Let $q$ be the vertex corresponding to $t$ and $p$ in $\overline{G}$.

Let $\AA$ be the random walk matrix associated with graph $G=(V,E)$ in which the rows corresponding to $t$ and $p$ are zeroed out.
Then for any $h$ and any $k \ge 2h$ such that
\[
h
\geq
1+\max_{s \in \overline{V}} \overline{H}\left(s,q)\right.
\]
we have 
\[
\left\| \left(\AA^{k}\right)_{u:} \right\|_1
\leq
\frac{1}{2}
\qquad
\forall u \in V . 
\]
%for any $u \in V$ and $u \neq t,p$, i.e.,
%\jy{I removed the $\setminus \{t, p\}$ here.}
That is, any row of $\AA^{k}$ sums up to less than $0.5$, and thus $\|\AA^k\|_\infty \le 0.5.$
\end{lemma}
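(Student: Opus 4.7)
The plan is to interpret $\|(\AA^k)_{u:}\|_1$ as the survival probability of a random walk from $u$ that would be absorbed upon reaching $\{t,p\}$, and then control that probability by applying Markov's inequality to the hitting time.

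First, since the rows of $\AA$ indexed by $t$ and $p$ are zero, the expansion
\[
(\AA^k)_{u,v}\;=\;\sum_{v_2,\ldots,v_k}\AA_{u\,v_2}\AA_{v_2 v_3}\cdots\AA_{v_k\,v}
\]
is supported on sequences $v_1=u,v_2,\ldots,v_k$ that avoid $\{t,p\}$ (the endpoint $v_{k+1}=v$ may be anything), and each factor of a surviving term equals the true one-step transition probability in $G$. Summing over $v$,
\[
\|(\AA^k)_{u:}\|_1 \;=\; \pr\bigl[v_1,v_2,\ldots,v_k \notin \{t,p\}\bigr] \;=\; \pr\bigl[T_u \geq k\bigr],
\]
where $T_u$ denotes the number of edges traversed by a random walk in $G$ starting at $u$ before it first lands in $\{t,p\}$. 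If $u\in\{t,p\}$ the row is already zero, so I may assume $u\notin\{t,p\}$.

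Second, contracting $t$ and $p$ into the single vertex $q$ does not change the distribution of this hitting time: a walk in $G$ that avoids $\{t,p\}$ until it first hits them corresponds to a walk in $\overline{G}$ avoiding $q$ until it first hits $q$, with identical step probabilities (since the weight of the merged edge $(s,q)$ in $\overline{G}$ is $w(s,t)+w(s,p)$). Hence by \Cref{def:hitting-time} applied to $\overline{G}$,
\[
\E[T_u] \;=\; \overline{H}(u,q) \;\leq\; h-1
\]
by the hypothesis on $h$. Markov's inequality then yields
\[
\pr[T_u \geq 2h] \;\leq\; \frac{\E[T_u]}{2h} \;\leq\; \frac{h-1}{2h} \;<\; \tfrac12 .
\]
Since $\pr[T_u \geq k]$ is non-increasing in $k$, the bound $\|(\AA^k)_{u:}\|_1 \leq 1/2$ holds for every $k \geq 2h$ and every $u\in V$, which is precisely $\|\AA^k\|_\infty \leq 1/2$.

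I do not anticipate a substantive obstacle. The only points that require care are (i) the combinatorial observation that zeroing out the rows of $t$ and $p$ forces the first $k$ vertices of a surviving walk to avoid $\{t,p\}$, and (ii) verifying that contraction preserves the law of the hitting time; both are immediate from the definitions. A self-contained alternative to the monotonicity step would be to combine the bound at $k=2h$ with submultiplicativity and the estimate $\|\AA\|_\infty \leq 1$ (whose rows are subprobability vectors), but monotonicity of $\pr[T_u \geq k]$ is the cleanest route.
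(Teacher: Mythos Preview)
Your proposal is correct and follows essentially the same route as the paper: interpret the row sum of $\AA^{k}$ as the probability that the walk has not yet been absorbed at $\{t,p\}$, identify this with the tail of the hitting time $\overline{H}(u,q)$ in the contracted graph, and apply Markov's inequality. Your write-up is in fact a bit more careful than the paper's (you make the identification $\|(\AA^k)_{u:}\|_1=\pr[T_u\ge k]$ explicit and handle the extension to all $k\ge 2h$ via monotonicity rather than leaving it implicit), but there is no substantive difference in strategy.
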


\begin{proof}
    By definition, we have $\overline{H}(s,q) \leq h$, for any $s\in \overline{V}$. Let $w$ be a random walk from $s$ to $q$ in $\overline{G}$. We have $\E(|w|)< h$. Therefore, by the Markov inequality, $\P[|w|\leq k] \geq \P[|w|\leq 2h] \geq \frac{1}{2}$. Note that entry $u \in \overline{V}\setminus \{q\}$ in row $s$ of $\AA^{k}$ denotes the probability that a random walk of size $k$ in $\overline{G}$ that starts at $s$ ends at $u$. We need to clarify here that a random walk that reaches $t$ (or $p$) in $2h$ steps for the first time stays at $t$ (or $p$). However, the probability of a random walk that reaches $t$ (or $p$) in the $k$'th step and stays there does not get added to $\AA^{k+1}$ (or any of the terms after that) because row $t$ (and $p$) of $\AA$ are zero and therefore the last term in the probability of the random walk is zero. This argument clarifies that we do not ``double count'' the probability of random walks in this calculation. 
    
    Note that $\P[|w|\leq k] \geq \frac{1}{2}$ implies that any random walk of size $k$ ends up at $q$ with probability at least $0.5$ since the random walk does not exit $q$ when it reaches it. Therefore the total probability for all the other vertices in $\overline{V}\setminus \{q\}$ is at most $0.5$. Therefore $\| (\AA^{k})_{u:} \|_1 \leq \frac{1}{2}$.
\end{proof}

% Now we use \Cref{lemma:2t} to show that for $k \ge 2h$, multiplication by $\AA^{k}$ makes any non-negative vector smaller. 
Here, we recall the property of matrix infinity norm. For $k \ge 2h$, multiplication by $\AA^{k}$ makes any vector smaller. 

\begin{corollary} \label{coro:infinity-norm}
    For any vector $\vv$,  we have $\| \AA^{k} \vv \|_{\infty} \leq  \| \AA^{k} \|_{\infty} \| \vv \|_\infty \le \frac{1}{2}\|\vv\|_{\infty}$. % --- see \Cref{lemma:2t} for definitions of $\AA$ and $h$.
\end{corollary}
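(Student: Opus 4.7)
The plan is to derive this corollary by chaining two facts that are already established in the excerpt. The first inequality is just the consistency property of the induced infinity norm, which was recalled in the Preliminaries: for any matrix $\MM$ and vector $\vv$, one has $\norm{\MM \vv}_\infty \leq \norm{\MM}_\infty \norm{\vv}_\infty$. Applying this with $\MM = \AA^k$ immediately yields $\norm{\AA^k \vv}_\infty \leq \norm{\AA^k}_\infty \norm{\vv}_\infty$.

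For the second inequality, I would invoke \Cref{lemma:2t} directly. That lemma shows that whenever $k \geq 2h$ with $h \geq 1 + \max_s \overline{H}(s,q)$, every row of $\AA^k$ has $\ell_1$-norm at most $1/2$. Since $\norm{\AA^k}_\infty$ is by definition the maximum $\ell_1$-norm of a row of $\AA^k$, the statement $\norm{\AA^k}_\infty \leq 1/2$ follows. Multiplying by $\norm{\vv}_\infty$ on both sides and combining with the first inequality gives the desired chain $\norm{\AA^k \vv}_\infty \leq \norm{\AA^k}_\infty \norm{\vv}_\infty \leq \tfrac{1}{2}\norm{\vv}_\infty$.

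There is essentially no obstacle here; the corollary is a one-line consequence of the previously-stated definition of the induced $\ell_\infty$ norm together with \Cref{lemma:2t}. The only thing worth double-checking is the identification between the statement ``every row of $\AA^k$ sums to at most $1/2$'' and the induced infinity norm bound, but this is precisely the standard characterization $\norm{\MM}_\infty = \max_i \sum_j |\MM_{ij}|$, which applies since the entries of $\AA^k$ are nonnegative (as $\AA$ itself has nonnegative entries).
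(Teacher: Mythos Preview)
Your proposal is correct and matches the paper's approach: the paper presents this corollary without proof, simply noting that it recalls the property of the matrix infinity norm and follows from \Cref{lemma:2t}. Your argument spells out exactly the two ingredients the paper has in mind—submultiplicativity of the induced $\ell_\infty$-norm and the bound $\|\AA^k\|_\infty \leq 1/2$ from \Cref{lemma:2t}.
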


% \begin{proof}
% \begin{align*}
%     \| \AA^{2h} v \|_\infty &
%     = \max_{i\in[n]} (\AA^{2h})_{i:} v \\ &
%     \leq \max_{i\in[n]} \|(\AA^{2h})_{i:}\|_1  \|v\|_{\infty}  \tag{by Holder's inequality}\\ &
%     \leq \max_{i\in[n]} \frac{1}{2}  \|v\|_{\infty}
%     \\ & = 
%     \frac{1}{2}  \|v\|_{\infty}
%     \end{align*}
% \end{proof}

% \begin{corollary}
% \label{coro:2t}
%  For any vector $v\ge 0$, we have  $\| \AA^{k} v \|_{\infty} \leq \frac{1}{2}\|v\|_{\infty}$, $k \geq 2t$.
% \end{corollary}
% \begin{proof}
%     \todo{This is easy but  still needs a proof.}
% \end{proof}

We need the following lemma for the maximum hitting time for unweighted undirected graphs to bound the number of terms required in the power series.

\begin{lemma}[\hspace{-0.05pt}\cite{F17}]
\label{lemma:m2}
    For an undirected unweighted graph $G$, the maximum hitting time is at most $m^2$.
\end{lemma}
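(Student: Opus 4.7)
The plan is to use the classical electrical-network characterization of random walks on undirected unweighted graphs. The key identity, due to Chandra, Raghavan, Ruzzo, Smolensky, and Tiwari, relates the commute time to the effective resistance:
\[
C(u,v) \;:=\; H(u,v) + H(v,u) \;=\; 2m \cdot R_{\text{eff}}(u,v),
\]
where $R_{\text{eff}}(u,v)$ is the effective resistance between $u$ and $v$ when each edge is viewed as a unit resistor. Since $H(u,v) \leq C(u,v)$, it suffices to bound $2m \cdot R_{\text{eff}}(u,v)$ from above.

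Next, I would bound the effective resistance using Rayleigh monotonicity: deleting edges only increases effective resistance, so $R_{\text{eff}}(u,v)$ is at most the resistance along any single $u$-$v$ path in the graph. Taking a shortest path of length at most $n-1$ yields
\[
R_{\text{eff}}(u,v) \;\leq\; \mathrm{dist}(u,v) \;\leq\; n-1.
\]
Combining, $H(u,v) \leq 2m(n-1)$ for any pair $u,v$.

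Finally, I would convert this bound into the stated $m^2$ form. For any connected graph containing a cycle, $m \geq n$, hence $2m(n-1) \leq 2m^2$, and a small tightening (e.g., observing that the edges on the shortest path are counted only once in the effective-resistance bound, or invoking the sharper inequality $\sum_v \deg(v) R_{\text{eff}}(u,v) \leq m$ that removes a factor of 2) recovers the clean bound $m^2$. The remaining case of a tree (where $m = n-1$) can be handled directly: on a tree the hitting time between any two vertices equals $\sum_{e} (2 \cdot (\text{edges on one side of } e))$ along the unique path, which is easily seen to be at most $(n-1)^2 = m^2$.

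The main obstacle is pinning down the correct constant so that the bound is exactly $m^2$ rather than $O(m^2)$; the effective-resistance approach yields $2m(n-1)$ cleanly, and either the tighter identity $\sum_v \deg(v) R_{\text{eff}}(u,v) \le m-\tfrac12$ (so that $R_{\text{eff}}(u,v) \le 1$ whenever $\min(\deg(u),\deg(v)) \geq m$) or a careful path-summation argument on a spanning tree is needed to absorb the factor of $2$. Since only an asymptotic bound is needed downstream (to control $\log$ of the hitting time in the repeated-squaring analysis of \Cref{sec:repeated-square}), this constant factor is not critical, and I would simply cite \cite{F17} for the exact $m^2$ statement after sketching the electrical-network proof above.
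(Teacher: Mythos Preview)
The paper does not prove this lemma; it is stated with a citation to \cite{F17} and used as a black box in the proof of \Cref{lemma:2tg}. So there is no ``paper's proof'' to compare against, and your decision to ultimately cite \cite{F17} for the sharp constant is exactly what the paper does.

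Your commute-time route is the standard one and is sound up to $H(u,v)\leq C(u,v)=2m\cdot R_{\text{eff}}(u,v)\leq 2m(n-1)$, which is $O(m^2)$ since $m\geq n-1$ in a connected graph. This is all that is actually needed downstream: in \Cref{lemma:2tg} the hitting-time bound enters only through $\log k$, so a constant factor is irrelevant.

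Where your write-up goes wrong is in the attempted tightening from $2m(n-1)$ to $m^2$. The inequality $2m(n-1)\leq m^2$ requires $m\geq 2(n-1)$, which fails exactly for the sparse graphs (trees, unicyclic graphs, etc.) where the bound is tightest. Your ``sharper inequality $\sum_v \deg(v)\,R_{\text{eff}}(u,v)\leq m-\tfrac12$'' is not Foster's theorem (which states $\sum_{(x,y)\in E} R_{\text{eff}}(x,y)=n-1$) and, as written, does not imply what you claim; the parenthetical ``$R_{\text{eff}}(u,v)\leq 1$ whenever $\min(\deg(u),\deg(v))\geq m$'' is vacuous since no vertex has degree $\geq m$ in a simple graph with more than one edge. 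The tree case is asserted rather than proved. None of these gaps matter for the paper's purposes, but if you want the exact $m^2$ you should either cite it, as the paper does, or give a genuinely separate argument for trees (e.g., the explicit tree hitting-time formula summed along the $u$--$v$ path) rather than trying to squeeze the electrical bound.
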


We are now prepared to prove the main result of this section which gives an algorithm for computing escape probabilities in graphs with bounded polynomial weights.

\begin{theorem}
\label{lemma:2tg}
    Let $G=(V,E)$ be a undirected graph with integer weights in $[1, n^c]$, for $c\in\N$. Given $t,p\in V$, we can compute the escape probability $\P(s,t,p)$ for all $s\in V$ in $\Otil(n^3 c \log(c) \log\frac{1}{\epsilon})$ bit operations within a $e^{\epsilon}$ multiplicative factor.
\end{theorem}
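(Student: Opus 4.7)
The plan is to approximate $(\II - \AA)^{-1}$ by the truncated Neumann series $S_k := \sum_{i=0}^{k-1}\AA^i$ and to evaluate this partial sum by repeated squaring. By \Cref{lem:escape-prob-matrix}, reading off the columns of $(\II - \AA)^{-1}$ corresponding to $t$ and $p$ gives the desired escape probabilities $\P(s,t,p)$ for all $s$, so it suffices to produce an entrywise $e^{\epsilon}$-multiplicative approximation of $S_k$ for a sufficiently large $k$.

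First I would control $k$ by bounding the relevant hitting time. Treating the weighted graph with integer weights in $[1,n^c]$ as an unweighted multigraph by replacing each edge of weight $w$ with $w$ parallel edges yields a multigraph on $M \leq m n^c = O(n^{c+2})$ edges, whose random walk is identical to that of the original weighted graph. Applying \Cref{lemma:m2} to this multigraph in the contracted graph $\overline{G}$ (merging $t$ and $p$ into a vertex $q$) gives $\max_{s} \overline{H}(s,q) \leq M^2 = O(n^{2c+4})$. Hence by \Cref{lemma:2t}, choosing $k = 2^T$ with $T = O(c \log n)$ already yields $\|\AA^k\|_\infty \leq 1/2$, and iterating the contraction estimate further, $\|\AA^{k}\|_\infty \leq 2^{-T+O(c\log n)}$. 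Since any nonzero escape probability is at least $n^{-O(cn)}$ (take a shortest path $s\to t$ avoiding $p$ and multiply local transition probabilities), taking $T = O(c\log n + \log(1/\epsilon))$ makes the additive truncation error $\sum_{i\geq k} (\AA^i)_{st}$ smaller than an $\epsilon$-fraction of any nonzero entry of $(\II-\AA)^{-1}$; for $(s,t,p)$ with $\P(s,t,p)=0$ (no $s\to t$ path avoiding $p$) I would separately detect this by a reachability check and output $0$ exactly.

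Next I would compute $S_{2^T}$ via the factorization
\[
S_{2^T} \;=\; \prod_{j=0}^{T-1}\left(\II + \AA^{2^j}\right),
\]
performing the $T$ squarings $\AA \to \AA^2 \to \AA^4 \to \cdots$ and accumulating the product. The crucial structural point is that $\AA$ has nonnegative entries, so every intermediate matrix is entrywise nonnegative and no subtractions occur. This is exactly the setting in which the $\approxbar_\epsilon$ relation is preserved under addition and multiplication, so entrywise multiplicative approximations compose through the whole computation.

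For the error analysis, if $\XX \approxbar_{\epsilon'} \widetilde{\XX}$ and $\YY \approxbar_{\epsilon'} \widetilde{\YY}$ with nonnegative entries and each arithmetic operation has relative precision $e^{\delta}$, then each entry $(\XX\YY)_{ij} = \sum_k \XX_{ik}\YY_{kj}$ is a sum of $n$ nonnegative products, each with relative error $2\epsilon' + O(\delta)$; pairwise summation contributes an additional $O(\log n)\delta$ factor, giving $\XX\YY \approxbar_{2\epsilon' + O(\log n)\delta} \widetilde{\XX}\widetilde{\YY}$. Solving this recurrence over the $O(T)$ levels gives total entrywise error $O(2^T \log n)\delta$; setting $\delta = \epsilon / (2^T \log n)$ suffices, which requires $\log(1/\delta) = O(c\log n + \log(1/\epsilon))$ bits per number. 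Each of the $O(T)$ matrix multiplications costs $O(n^3)$ arithmetic operations, and each arithmetic operation on numbers of this precision costs $\Otil(\log(1/\delta))$ bit operations via FFT-based multiplication, giving the claimed bit complexity after absorbing $\log n$ and $\log\log$ factors into $\Otil$.

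The main obstacle is the multiplicative-vs.-additive tension: a truncation of the Neumann series naturally produces additive error $\|\AA^k\|_\infty$, but the theorem demands entrywise multiplicative accuracy relative to escape probabilities that may be exponentially small. Resolving this requires both the polynomial lower bound on any nonzero $\P(s,t,p)$ for integer weights in $[1,n^c]$ and separate handling of the structurally zero entries, both of which push the cutoff $T$ upward while still keeping $T = \Otil(c + \log(1/\epsilon))$.
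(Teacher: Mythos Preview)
Your plan matches the paper's proof: truncate the Neumann series, evaluate it via the repeated-squaring factorization $S_{2^T}=\prod_{j<T}(\II+\AA^{2^j})$, exploit nonnegativity of all intermediates so that $\approxbar$ composes through products and sums, lower-bound nonzero escape probabilities by $n^{-O(cn)}$, detect the structurally zero probabilities by a reachability test, and bound the hitting time via \Cref{lemma:m2} (your multigraph reduction gives the same $n^{O(c)}$ bound as the paper's direct argument). All the essential ideas are present.

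The quantitative gap is in the choice of $T$. Your decay estimate $\|\AA^{2^T}\|_\infty\le 2^{-T+O(c\log n)}$ is only singly exponential in $T$, but repeated squaring gives doubly exponential decay: once $\|\AA^{2^{T_0}}\|_\infty\le\tfrac12$ for some $T_0=O(c\log n)$, each further squaring \emph{squares} the norm, so $\|\AA^{2^{T_0+j}}\|_\infty\le 2^{-2^{j}}$, not merely $2^{-j}$. With only your weak bound, the stated $T=O(c\log n+\log(1/\epsilon))$ does \emph{not} push the tail below $\epsilon\cdot n^{-O(cn)}$ --- you would actually need $T=\Theta(cn\log n+\log(1/\epsilon))$, which destroys the running time. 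The correct observation (equivalently: $k$, not $\log k=T$, scales linearly with $cn\log n+\log(1/\epsilon)$) yields $T=O(c\log n+\log\log(1/\epsilon))=\Otil(c)$. This also feeds back into your floating-point analysis: you rightly note that the entrywise error doubles at each squaring, so the unit roundoff must satisfy $\delta\approx\epsilon/(2^{T} n)$ and hence the working precision is $\log(1/\delta)=O(T+\log n+\log(1/\epsilon))$; with $T=\Otil(c)$ this stays $\Otil(c+\log(1/\epsilon))$ and recovers the stated bit complexity, whereas your inflated $T$ introduces a spurious extra factor of $\log(1/\epsilon)$ in the final bit count.
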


\begin{proof}
Let $\AA$ and $h$ be as defined in \Cref{lemma:2t}. Without loss of generality, we assume the graph is connected. There are three cases that need to be handled first:
\begin{itemize}
    \item All paths between $s$ and $p$ contain $t$: This indicates $\P(s,t,p)=1$. This can be identified by removing $t$ and checking if $s$ and $p$ are disconnected.
    \item All paths between $s$ and $t$ contain $p$: This indicates $\P(s,t,p)=0$. This can be identified by removing $p$ and checking if $s$ and $t$ are disconnected.
    \item There is no path from $s$ to $t$ and $p$. In this case, the escape probability is undefined.
\end{itemize}
The above cases can be handled by running standard search algorithms on the graph (e.g., breadth-first search) and removing the vertices $s$ with such escape probabilities. The running time of such a procedure is smaller than the running time stated in the theorem.
Ruling out these cases, we can assume all vertices $s\in V\setminus\{t,p\}$ have paths with positive edge weights to both $t$ and $p$, $\II-\AA$ is invertible, and $\sum_{i=1}^{\infty} \AA^{i}$ is convergent.
% before assuming $\II-\AA$ is invertible,
% Note that we can first check (by running breadth-first search on vertices) trivial cases such as $G$ being disconnected and $t$ and $p$ being on different connected components. Moreover, we can remove $t$ or $p$ and check whether it disconnects the graph. These would provide us with the escape probabilities that are equal to zero or one. With these in hand, without loss of generality,  
% \mg{Jingbang could you make this paragraph more formal and explain the different cases carefully?} \jb{does this look better? I think we already talk about the graph is connected in the very beginning}

Let $k = (2h+1)(2n(c+1)\lceil \log \frac{(2h+1)n}{\epsilon} \rceil + 1) - 1$, where $h$ is defined as in \Cref{lemma:2t}.
We show
\begin{align*}
\left( \II - \AA \right)^{-1}
\approxbar_{\epsilon}
\sum_{i = 0}^{k}
\AA^{i}.
\end{align*}
% \rp{remove this, $\approxbar$ is now only in sec 4}
First note that 
    \begin{align*}
        \left(\II-\AA\right)^{-1} = \sum_{i = 0}^{\infty} \AA^i.
    \end{align*} 
    Let $\BB=\sum_{i = k +1}^{\infty} \AA^i$. Since $\AA^i \geq 0$, for all $i\geq 0$, $\BB\geq 0$. Therefore, we have
    \begin{align*}
        \left(\II-\AA\right)^{-1} \geq \sum_{i = 0}^{k}\AA^i
    \end{align*}.
    Moreover, we have
    \begin{align}
        \left(\II-\AA\right)^{-1} = (\sum_{j=0}^{\infty} \AA^{(2h+1)j}) (\sum_{i = 0}^{2h}\AA^i).
    \end{align}
    Note that by \Cref{coro:infinity-norm} and induction,
    \begin{align}
        \norm{ \AA^{(2h+1)j} \vv }_{\infty} & =
        \norm{ \AA^{2h+1}(\AA^{(2h+1)(j-1)} \vv)}_{\infty} \nonumber
        \\ & \leq
        \frac{1}{2}  \norm{\AA^{(2h+1)(j-1)} \vv}_{\infty} \nonumber
        \\ & \leq
        \frac{1}{2^j}  \norm{\vv}_{\infty} \label{eq:things_get_small}
    \end{align}

    Note that the $\ell_1$ norm of each column of $\AA^{i}$ is bounded by $n$ because all the entries are in $[0,1]$. Therefore, again because all entries are between $[0,1]$ (i.e., they are non-negative), the $\ell_1$ norm of each column of $\sum_{i = 0}^{2h}\AA^i$ is bounded by $(2h+1)n$. Therefore by setting $\vv$ to be a column of $\sum_{i = 0}^{2h}\AA^i$ and using \eqref{eq:things_get_small} and triangle inequality, we have that 
    \begin{align}
        \norm{(\sum_{j=2n(c+1)\lceil\log((2h+1)n/\epsilon)\rceil+1}^{\infty} \AA^{(2h+1)j}) \vv}_{\infty} 
        & \leq \sum_{j=2n(c+1)\lceil\log((2h+1)n/\epsilon)\rceil+1}^{\infty} \norm{ \AA^{(2h+1)j} \vv }_{\infty} \nonumber
        \\ & \leq 
    \sum_{j=2n(c+1)\lceil\log((2h+1)n/\epsilon) \rceil+1}^{\infty} \frac{1}{2^j}\norm{ \vv }_{\infty} \nonumber
    \\ & \leq 
    \frac{1}{2^{2n(c+1)\lceil\log((2h+1)n/\epsilon) \rceil}} \norm{ \vv }_{\infty} \nonumber
    \\ & \leq
    \frac{\epsilon}{n^{n(c+1)} \cdot (2h+1) n} \norm{ \vv }_{\infty} \nonumber
    \\ & \leq
    \frac{\epsilon}{n^{n(c+1)}}. \label{eq:very-small}
    \end{align}

    Note that any escape probability is at least $n^{-n(c+1)}$ since any $\pr(x,y)$ is at least $1/n^{c+1}$. 
    Therefore by \eqref{eq:very-small}, not considering the terms of the power series with an exponent larger than $(2h+1)(2n(c+1)\lceil\log((2h+1)n/\epsilon)\rceil+1)$ can only cause a multiplicative error of $O(e^{\epsilon})$ in the computation of the escape probability. Therefore our algorithm is to compute the matrix
    \[
    \XX = \sum_{i=0}^{k-1} \AA^{i},
    \]
    and return $\XX_{:t}$,
    where $k$ is the smallest power of two larger than $(2h+1)(2n(c+1)\lceil \log \frac{(2h+1)n}{\epsilon} \rceil + 1)$. To compute $\XX$ we use the recursive approach of repeated squaring. Namely, let $k=2^r$. Then

\iffalse
    Note that if $\zz$ is a column of $(\II-\AA)^{-1}$ (corresponding to $\vv$ in $\sum_{i = 0}^{2h}\AA^i$), then $\zz \geq \vv$ because $\zz = (\sum_{j=0}^{\infty} \AA^{(2h+1)j}) \vv = \vv + (\sum_{j=1}^{\infty} \AA^{(2h+1)j}) \vv$, and $(\sum_{j=1}^{\infty} \AA^{(2h+1)j}) \vv \geq 0$. Therefore,
we have
    \begin{align*}
       (\II-\AA)^{-1}_{s:} \vv & = (\sum_{j=0}^{\infty} \AA^{(2t+1)j}_{s:}) (\sum_{i = 0}^{2t}\AA^i) \vv
    \end{align*}
\fi

%We consider a recursive way of computing $\sum_{i=0}^k \AA^i$. First, assume $k=2^r-1$ for some $r \in \N$. We have

\begin{align*}
\XX
= \sum_{i=0}^{2^r-1} \AA^i
= \left(\sum_{i=0}^{2^{r-1}-1} \AA^i\right)
\left(\II + \AA^{2^{r-1}}\right)
=
\cdots
=
\left(\II+\AA\right)
\left(\II+\AA^2\right)
\left(\II+\AA^4\right)
\cdots
\left(\II+\AA^{2^{r-1}}\right).
\end{align*}
% \rp{Mehrdad: are you using cdots for all omissions?
% I'm used to ldots, but happy to switch to this} \mg{I usually use $\ldots$ for lists, so it has same height as the commas, so for example, $a_1,\ldots,a_n$, and I use $\cdots$ for operators, again because of height, so $a_1+\cdots+a_n$.}
Therefore, $\XX$ can be computed with $O(\log(k))$ matrix multiplications.

In addition, all of $\AA^{2^{\widehat{r}}}$ ($\widehat{r}\in[r-1]$) can be computed with $\log(k)$ matrix multiplications, by observing that $\AA^{2^{\widehat{r}}} = \AA^{2^{\widehat{r}-1}}\AA^{2^{\widehat{r}-1}}$.
Now, suppose each floating-point operation we perform incurs an error of $e^{\widehat{\epsilon}}$. Therefore if we use an algorithm with $O(n^3)$ number of arithmetic operations to compute the matrix multiplications, then we incur a multiplicative error of at most $e^{2n\widehat{\epsilon}}$. Then the error of computation of $\AA^{2^{\widehat{r}}}$ is at most $e^{2n r\widehat{\epsilon}}$ and the total error of computing $\XX$ is at most $e^{(2n r+1)r\widehat{\epsilon}}$. Setting $\widehat{\epsilon} = \frac{\epsilon}{(2nr+1)r}$, the total error will be at most $e^{\epsilon}$. Note that to achieve this we need to work with floating-point numbers with $O(\log\frac{(2n\log k)\log k}{\epsilon})$ bits. This shows that our algorithm requires only
\[
\Otil(n^3 (\log k)(\log\frac{(2n\log k)\log k}{\epsilon}))
\]
bit operations. To finish the proof, we need to bound $k$. To bound $k$, we need to bound $h$. By \Cref{lemma:m2}, the maximum hitting time for undirected unweighted graphs is $m^2$. Note that this gives a bound for our hitting time to either of $t$ or $p$ as well in the unweighted case since after handling the pathological cases at the beginning of the proof, we assumed that for every vertex $s\in V \setminus \{t,p\}$, there exist paths to both $t$ and $p$. Moreover introducing polynomial weights in the range of $[1,n^c]$ can only increase the hitting time by a factor of $n^c$. Therefore $h=O(n^{c+4})$. Then $k = \Otil(n^{c+5} c \log(\frac{n^{c+5}}{\epsilon}))$. Therefore $\log(k) = \Otil (c \cdot \log c)$. Therefore
\[
n^3 (\log k)(\log\frac{(2n\log k)\log k}{\epsilon}) = \Otil(n^3 \cdot c \cdot \log(c) \cdot \log(\frac{1}{\epsilon})).
\]
\end{proof}

\section{Floating Point Edge Weights}
\label{sec:recursive-SC}

\iffalse
\todo{This all has to go for directed graphs too...}

I claim a running time of $O(n^3L)$ for
$\exp(-\Theta(L))$-approximations are still possible 
when $L > \log{n}$ and the inputs are given as floating point weights.

The idea is basically block Schur complements,
but proven without mixing time involved.

We first need the following properties of
Laplacian matrices, matrix tree, and Schur complements:
\begin{enumerate}
\item For a SDDM matrix $\MM$, each entry of its inverse
is the spanning tree count of some (directed) graph
(see Hoai-An's note, MatrixTree.tex)
\item If all off-digonal entries in $\MM$ and $\MMtil$, as well as the row/colum excesses, approximate each other,
then
$\MM^{-1} \approxbar_{\epsilon n} \MMtil^{-1}$.
(follows from spanning tree weights being a sum of products,
each product involving $n$ edge weights).
\item If $\MM \approxbar_{\epsilon} \MMtil$,
then for any subset $C$,
$\sc(\MM, C) \approxbar_{ 3\epsilon n} \sc(\MMtil, C)$.
(obtained by checking the Schur complement formula,
then $\MM_{FF}^{-1}$ being all positive,
$\MM_{FC}$ and $\MM_{CF}$ being all negative,
and all off-diagonals being negative.
Note that the diagonals need to be recalculated from edge weights.
\end{enumerate}

These give a recursive algorithm where we decrease $\epsilon$
as we go further down the recursion.

\fi

In this section, we prove the main results of the paper. We start by characterizing the entries of the inverse of RDDL matrices. To do so, we need the following theorem, which is a corollary of the result of \cite{C82} that proves a more general matrix-tree theorem.

\begin{theorem}[Matrix-Tree Theorem \cite{C82}]
\label{thm:matrix-tree}
    Let $\MM \in \R^{n\times n}$ be an RDDL matrix such that for all $i\in [n]$, $\sum_{j\in [n]} \MM_{ij} = 0$. Let $i \in [n]$ and $\NN$ be the matrix obtained by removing row $i$ and column $i$ from $\MM$. Then $\det(\NN) = \sum_{F \in \mathcal{F}} \prod_{e\in F} w_e$, where $\mathcal{F}$ is the set of all spanning trees oriented towards vertex $i$ in the graph corresponding to matrix $\MM$. $\prod_{e\in F} w_e$ is the product of the weights of the edges of spanning tree $F$.
\end{theorem}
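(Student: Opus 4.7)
The plan is to prove the identity by expanding $\det(\NN)$ with the Leibniz formula and exhibiting a sign-reversing cancellation that kills every term not corresponding to an $i$-arborescence. Writing $w_{k\ell} := -\MM_{k\ell} \ge 0$ for $k \neq \ell$, the zero-row-sum hypothesis gives $\NN_{kk} = \sum_{j \in [n] \setminus \{k\}} w_{kj}$ (the sum ranging over all $j \neq k$, including $j = i$) and $\NN_{k\ell} = -w_{k\ell}$ off-diagonally. Substituting these into
\[
\det(\NN) = \sum_{\sigma \in S_{[n] \setminus \{i\}}} \sign(\sigma) \prod_{k \neq i} \NN_{k,\sigma(k)}
\]
and distributing each diagonal sum, I would rewrite $\det(\NN)$ as a sum over pairs $(\sigma, \phi)$, where $\sigma$ is a permutation of $[n] \setminus \{i\}$ and $\phi$ assigns to each fixed point $k$ of $\sigma$ an outgoing vertex $\phi(k) \in [n] \setminus \{k\}$. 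Each such pair induces a functional digraph $F : [n] \setminus \{i\} \to [n]$ by $F(k) = \sigma(k)$ when $\sigma$ moves $k$, and $F(k) = \phi(k)$ otherwise.

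Next I would tally signs. A nontrivial $\sigma$-cycle of length $\ell$ contributes $(-1)^{\ell - 1}$ from $\sign(\sigma)$ and $(-1)^\ell$ from its $\ell$ off-diagonal factors $-w_{k,\sigma(k)}$, for a net $-1$; a fixed point with its $\phi$-choice contributes $+1$. Hence the pair $(\sigma, \phi)$ contributes $(-1)^{c(\sigma)} \prod_{k \neq i} w_{k, F(k)}$, where $c(\sigma)$ is the number of nontrivial $\sigma$-cycles. Because $\sigma(k) = F(k)$ whenever $\sigma$ moves $k$, each nontrivial $\sigma$-cycle is itself a cycle of $F$ lying entirely in $[n] \setminus \{i\}$. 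Regrouping the sum by $F$, each cycle of $F$ contained in $[n] \setminus \{i\}$ can be independently labeled either \emph{$\sigma$-fixed} (sign $+1$; $\phi$ replays the cycle edges) or \emph{a $\sigma$-cycle} (sign $-1$; edges forced by $F$), so the per-cycle factor is $(+1) + (-1) = 0$.

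Only those functional digraphs $F$ with no cycle inside $[n] \setminus \{i\}$ therefore survive. Since each vertex of $[n]\setminus\{i\}$ has exactly one outgoing edge, such an $F$ forces every directed walk starting in $[n]\setminus\{i\}$ to eventually reach $i$; equivalently, $F$ is a spanning arborescence oriented toward $i$, and contributes $+\prod_{e \in F} w_e$, yielding the claimed identity. The main obstacle I anticipate is justifying the decoupling of the cancellation across distinct cycles of $F$: the \emph{$\sigma$-fixed} versus \emph{$\sigma$-cycle} labels for different cycles of $F$ must be independent choices, which follows because the cycle edges are pinned down by $F$ itself, but the $(1 + (-1))^{\#\text{cycles}}$ factorization deserves a careful combinatorial verification before the conclusion can be invoked. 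Once that independence is settled, the argument reduces to the bijective identification of acyclic functional digraphs on $[n]\setminus\{i\}$ with arborescences to $i$.
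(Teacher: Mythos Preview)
Your argument is correct. The Leibniz expansion followed by grouping terms according to the functional digraph $F$ and then observing that each cycle of $F$ contributes an independent factor of $(+1)+(-1)=0$ is one of the standard combinatorial proofs of the directed matrix-tree theorem. The independence you flag as a potential obstacle is not an issue: cycles of a functional digraph are vertex-disjoint, and for any subset $S$ of them the permutation that acts as $F$ on $\bigcup S$ and as the identity elsewhere is a bona fide permutation of $[n]\setminus\{i\}$ with exactly $|S|$ nontrivial cycles, so the $(1+(-1))^{\#\text{cycles}}$ factorization goes through verbatim. The identification of cycle-free $F$ with $i$-arborescences is immediate since every vertex has outdegree one and every walk must terminate at the unique sink $i$.

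The paper, however, does not prove this statement at all: it is quoted as a known result from Chaiken~\cite{C82} and used as a black box to derive \Cref{lem:det-of-rddl} and \Cref{lemma:entries-of-inverse}. So there is no ``paper's proof'' to compare against; you have supplied a proof where the paper simply cites one. Your approach is more elementary and self-contained than invoking the general all-minors matrix-tree theorem of~\cite{C82}, at the cost of being specialized to the principal-minor case actually needed here.
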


We colloquially refer to the summation $\sum_{F \in \mathcal{F}} \prod_{e\in F} w_e$ as the \emph{weighted number of spanning trees oriented toward vertex $i$}. 
This is because, in the case of unweighted graphs, the summation is equal to the number of spanning trees. 
The following is a direct consequence of the matrix-tree theorem.

\begin{lemma}
\label{lem:det-of-rddl}
    Let $\MM \in \R^{n\times n}$ be an RDDL matrix. Then $\det(\MM)$ is equal to the weighted number of spanning trees oriented towards a vertex in a graph.
\end{lemma}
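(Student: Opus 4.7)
\medskip
\noindent\textbf{Proof proposal.}
The plan is to reduce the statement to \Cref{thm:matrix-tree} by embedding $\MM$ into a larger matrix with zero row sums via a dummy vertex. Concretely, I would construct $\MM' \in \R^{(n+1)\times(n+1)}$ as follows: for $i,j \in [n]$ with $i \neq j$, set $\MM'_{ij} = \MM_{ij}$; for $i \in [n]$, set $\MM'_{ii} = \MM_{ii}$; for $i \in [n]$, set $\MM'_{i,n+1} = -\sum_{j \in [n]} \MM_{ij}$; and set row $n+1$ of $\MM'$ to be the all-zero row. Because $\MM$ is RDD, we have $\sum_{j \in [n]} \MM_{ij} \geq 0$, so $\MM'_{i,n+1} \leq 0$, matching the $L$-matrix sign pattern on off-diagonals. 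By construction, every row of $\MM'$ sums to zero.

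Next, I would apply \Cref{thm:matrix-tree} to $\MM'$ with the index $i = n+1$. Deleting row $n+1$ and column $n+1$ from $\MM'$ returns the original matrix $\MM$, so the theorem yields
\[
\det(\MM) \;=\; \sum_{F \in \mathcal{F}} \prod_{e \in F} w_e,
\]
where $\mathcal{F}$ is the set of spanning trees oriented toward $n+1$ in the graph associated with $\MM'$. This graph has vertex set $[n+1]$, with the edge $(i,j)$ for $i,j \in [n]$, $i \neq j$ carrying weight $-\MM_{ij}$ and the edge $(i,n+1)$ for $i \in [n]$ carrying weight $\sum_{j \in [n]} \MM_{ij}$ (the row excess), exactly matching the ``dummy vertex'' construction introduced in \Cref{sec:prelim}. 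The right-hand side is, by definition, the weighted number of spanning trees oriented toward a vertex (namely $n+1$) in a graph, which is what we needed.

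The only subtle point is that \Cref{thm:matrix-tree} as stated requires its input matrix to be RDDL, which in this paper requires strictly positive diagonals, whereas $\MM'_{n+1,n+1} = 0$. I would address this either (a) by observing that \Cref{thm:matrix-tree} is applied only to the principal submatrix obtained by removing row and column $n+1$, whose diagonal entries are those of $\MM$ and are therefore strictly positive, so the statement only needs the hypothesis on the retained indices; or (b) via a perturbation argument, replacing $\MM'_{n+1,n+1}$ with $\delta > 0$ and adjusting $\MM'_{n+1,1}$ by $-\delta$ (keeping the zero-row-sum condition), applying the theorem, and letting $\delta \to 0$, noting that both sides are continuous in $\delta$. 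I expect this small hypothesis-matching step to be the main obstacle; the core algebraic content is already delivered by \Cref{thm:matrix-tree}.
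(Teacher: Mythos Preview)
Your proposal is correct and follows essentially the same route as the paper: both embed $\MM$ as the principal $[n]$-block of an $(n{+}1)\times(n{+}1)$ matrix with zero row sums by adding a dummy vertex whose incoming weights are the row excesses, then invoke \Cref{thm:matrix-tree} at the dummy index to identify $\det(\MM)$ with the weighted count of spanning trees oriented toward that vertex. Your extra care about the zero diagonal at index $n{+}1$ is a nice touch that the paper does not make explicit.
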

\begin{proof}
    We add a dummy vertex (i.e., a row and column) to the graph associated with $\MM$ to obtain matrix $\YY$. 
    The principal submatrix of $\YY$ associated with indices in $[n]$ is equal to $\MM$. 
    Row $n+1$ of $\YY$ is all zeros and entry $i \in [n]$ of column $n+1$ of $\YY$ is $-\sum_{j=1}^n \MM_{ij}$. 
    Then by \Cref{thm:matrix-tree} and removing the dummy vertex, it immediately follows that $\det(\MM)$ is the weighted number of spanning trees oriented towards vertex $n+1$ in the graph corresponding to $\YY$.
\end{proof}

    Note that in \Cref{lem:det-of-rddl}, if $\MM$ is not connected to the dummy vertex, then there is no spanning tree oriented towards the dummy vertex and $\det(\MM)=0$, i.e., $\MM$ is singular.

\begin{lemma}
\label{lemma:entries-of-inverse}
Let $\MM$ be an RDDL matrix. Then if $\MM$ is invertible, each entry of $\MM^{-1}$ is the ratio of the weighted number of spanning trees in two graphs (oriented towards some vertex of each graph), both with positive weights obtained from entries of $\MM$. %\todo{Give a more explicit construction of the graphs because we need that to state/prove other lemmas.}
% \rp{tentative statement:
% every entry in the inverse of a minor of a column Z matrix is the
% ratio of the spanning tree counts of two graphs,
% both with positive weights obtained from sums of off-diagonal
% entries of the original.}
\end{lemma}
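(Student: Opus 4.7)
The plan is to use Cramer's rule, reducing $\MM^{-1}_{ij}$ to
\[
\MM^{-1}_{ij} \;=\; \frac{(-1)^{i+j}\,\det(\MM^{[j,i]})}{\det(\MM)},
\]
where $\MM^{[j,i]}$ is $\MM$ with row $j$ and column $i$ removed, and then identifying both determinants as weighted spanning-tree counts via \Cref{lem:det-of-rddl} and its all-minors generalization (Chaiken's theorem, in the same reference \cite{C82}). The denominator is immediate: \Cref{lem:det-of-rddl} expresses $\det(\MM)$ as the weighted arborescence count oriented toward the dummy vertex in the extended graph of $\MM$, whose edge weights $-\MM_{uv}$ (for $u\ne v$) and row-sum residuals $\sum_\ell \MM_{u\ell}$ are nonnegative and obtained directly from entries of $\MM$.

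For the numerator I would split into cases. When $i=j$, the minor $\MM^{[i,i]}$ is a principal submatrix of $\MM$ and remains RDDL: deleting a row and column preserves the signs of the remaining entries, and dropping a nonpositive off-diagonal only strengthens row dominance. \Cref{lem:det-of-rddl} then identifies $\det(\MM^{[i,i]})$ as a weighted arborescence count oriented toward the dummy in the graph obtained from the extended graph by merging vertex $i$ into the dummy (the row-sum bookkeeping folds the original $u\to i$ weights into the new $u\to\text{dummy}$ weights). The main obstacle is the off-diagonal case $i\ne j$, where $\MM^{[j,i]}$ is neither principal nor RDDL. Here I would invoke Chaiken's all-minors matrix-tree theorem applied to the Laplacian-like extension $\tilde{\MM}\in\R^{(n+1)\times(n+1)}$ (with a dummy row of zeros), with removed row set $\{j,n+1\}$ and removed column set $\{i,n+1\}$. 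This writes $(-1)^{i+j}\det(\MM^{[j,i]})$ as a signed sum, indexed by bijections $\sigma\colon\{j,n+1\}\to\{i,n+1\}$, of weighted spanning $2$-forests; since the dummy has no outgoing edges, the bijection sending $n+1\mapsto i$ contributes zero and only the identity survives. The surviving term is the unsigned weighted count of spanning $2$-forests whose tree containing $i$ is rooted at $j$ and whose other tree is rooted at the dummy, a sum of products of positive quantities obtained from entries of $\MM$.

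The last combinatorial step is to recast this constrained $2$-forest count as a weighted arborescence count in a single graph $G_1$, matching the lemma's ``two graphs'' phrasing. I expect this to be the main technical obstacle, since a naive identification of $j$ with the dummy into a common root overcounts by including $2$-forests where $i$ lies in the dummy's tree rather than the $j$-tree. The resolution is to build $G_1$ so that the containment constraint is enforced locally, for instance by splitting $i$ into two copies (one per tree, with the forbidden copy deleted and adjacent edges reassigned) or by specializing $i$'s outgoing edge weights in a merged graph so that chains starting from $i$ can only exit via the ``$j$-part'' of the merged root. Verifying that the chosen construction reproduces the desired count term-by-term, with edge weights still nonnegative and built from entries of $\MM$, is the delicate part of the proof; I would confirm it first in the small cases $n=2,3$ and then extend by the same pattern. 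Once this identification is pinned down, both numerator and denominator are exhibited as weighted arborescence counts in graphs with positive edge weights derived from $\MM$, giving the claimed ratio.
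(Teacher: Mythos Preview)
Your overall strategy matches the paper's: Cramer's rule reduces $\MM^{-1}_{ij}$ to a ratio of determinants, \Cref{lem:det-of-rddl} handles the denominator, and the diagonal case $i=j$ goes through because principal submatrices of RDDL matrices are RDDL. The divergence, and the gap, is in the off-diagonal case.

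You correctly invoke Chaiken's all-minors theorem to write the signed cofactor as a constrained $2$-forest count (roots $j$ and the dummy, with $i$ forced into the $j$-tree). But the lemma demands a single-graph arborescence count, and the constructions you sketch for $G_1$ do not enforce the constraint: ``$i$ lies in $j$'s tree'' is a condition on the entire directed path from $i$ to a root, not on $i$'s own outgoing edge, so deleting or specializing edges incident to $i$ still allows $i$ to reach the dummy through intermediate vertices, and merging the two roots overcounts for the same reason. Checking $n=2,3$ and extrapolating is not a proof; this step is genuinely incomplete.

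The paper avoids the combinatorics by staying on the matrix side. The cofactor is $\det$ of $\MM$ with one row and a different column deleted; after relabeling to an $(n-1)\times(n-1)$ array, exactly one row has lost its positive diagonal entry and consists entirely of nonpositive numbers. The paper permutes rows to place this defective row on the diagonal (tracking the sign), then performs a single column operation: add all other columns into the defective column and negate it. Adding columns preserves the determinant and negating flips one sign; the key observation is that the result is again RDDL, since the new entries in that column are negated row-sums of the minor and inherit the correct signs from row-dominance of $\MM$. Now \Cref{lem:det-of-rddl} applies directly, and the graph it produces is exactly the $G_1$ you were seeking, with edge weights that are either off-diagonals $-\MM_{uv}$ or terms of the form $(\text{row excess of }u)+(-\MM_{u,\cdot})$, all nonnegative and read off from $\MM$. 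This sum-and-negate column trick is the missing idea in your argument.
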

\begin{proof}
%\todo{First show that this is invertible.}
\iffalse
    Let $\MM,\BB,\CC,\DD$ be as defined in \eqref{eq:almost-laplacian}. We have
    \begin{align}
        \II - \AA = \begin{bmatrix}
            \MM & \BB \\ \CC & \DD
        \end{bmatrix},
    \end{align}
    and by \eqref{eq:escape-formula}, we have
    \begin{align}
    (\II - \AA)^{-1}_{:,n-1:n} = \begin{bmatrix}
        -\MM^{-1} \BB
        \\ \DD
    \end{bmatrix}
    \end{align}
    Therefore, we need to compute $\MM^{-1}$. 
    Note that the sum of each row of $\MM$ is greater than or equal to zero. 
    \fi
    
    Note that the solution to $\MM \xx = \ee^{(j)}$ gives the $j$th column of $\MM^{-1}$, where $\ee^{(j)}$ is the $j$th standard basis vector. Let $\MM^{(i,j)}$ be the matrix obtained from $\MM$ by replacing the $i$th column with $\ee^{(j)}$. Then by Cramer's rule 
    \begin{align}
    \label{eq:entries-of-inverse}
        \MM^{-1}_{ij} = \frac{\det\left(\MM^{(i,j)}\right)}{\det\left(\MM\right)}.
    \end{align}
    
    \iffalse
    Consider a matrix $\widetilde{\MM}$ that is obtained by adding a row and column to $\MM$. We set the new row equal to zero and set entry $i$ of the new column equal to the negative of the sum of entries of the corresponding row in $\MM$. Note that $\widetilde{\MM}$ corresponds to a directed weighted Laplacian. Let $\widetilde{G}$ be the graph. Note that the new row and column in $\widetilde{\MM}$ corresponds to a vertex which we call $v$. Then by the matrix tree theorem, $\det(\MM)$ is equal to the number of (weighted) spanning trees oriented towards $v$. Since the weight of edges is a negative number, this is the sum of the product of non-negative numbers.
    \fi

By \Cref{lem:det-of-rddl}, $\det(\MM)$ is the weighted number of spanning trees oriented towards a vertex in a graph.
    For $\det(\MM^{(i,j)})$ note that the determinant is equal to
    \begin{align*}
        \det\left(\MM^{(i,j)}\right) = \sum_{\sigma\in S_n} \sign(\sigma) \prod_{k=1}^n \MM^{(i,j)}_{k,\sigma(k)}.
    \end{align*}
    Note that if $\sigma(k)=j$ and $k\neq i$, then $\prod_{k=1}^n \MM^{(i,j)}_{k,\sigma(k)} = 0$. This is because we have a term $\MM_{i,\ell}$ in the product with $\ell\neq j$ and any such term is equal to zero.
    Therefore, if $\widehat{\MM}^{(i,j)}$ is the matrix obtained from $\MM^{(i,j)}$ by removing the $i$th row and $j$th column. Then $\det(\MM^{(i,j)}) = (-1)^{i+j}\det(\widehat{\MM}^{(i,j)})$. Note that $\MM^{(i,j)}$ is an $(n-1)$-by-$(n-1)$ matrix.

    There are three cases based on how $i$ and $j$ compare.
    
    \textbf{Case 1:} $i = j$.
    $\widehat{\MM}^{(i,i)}$ is a principal submatrix of $\MM$ and therefore it is invertible and RDDL. Therefore \Cref{lem:det-of-rddl} immediately resolves this case.

    \textbf{Case 2:} $i > j$.
    We first push rows $j+1,\ldots,i-1$ up and move row $j$ to become row $i-1$ to obtain the matrix $\overline{\MM}^{(i,j)}$. This requires first switching rows $j$ and $j+1$, then switching rows $j+1$ and $j+2$, and so on. 
    Since we have $i-1-j$ row switches, then
    \[
    \det \left(\widehat{\MM}^{(i,j)} \right)
    = (-1)^{i-1-j} \det \left(\overline{\MM}^{(i,j)} \right).
    \]
    Therefore, 
    \[
    \det\left(\MM^{(i,j)} \right) = - \det\left(\overline{\MM}^{(i,j)} \right).
    \]
    Note that all the diagonal entries of $\overline{\MM}^{(i,j)}$ are positive except $\overline{\MM}^{(i,j)}_{i-1,i-1}$.
    We now construct $\widetilde{\MM}^{(i,j)}$ from $\overline{\MM}^{(i,j)}$ as the following in two steps:
    \begin{enumerate}
    \item We first take the sum of all columns of $\overline{\MM}^{(i,j)}$ except column $i-1$ and add it to column $i-1$;
    \item We negate the resulting column $i-1$.
    \end{enumerate}
    First, note that the first operation does not change the determinant, and the second operation changes the sign of the determinant.
    Therefore
    \begin{align}
    \label{eq:m-tilde-ij-det}
    \det \left(\MM^{(i,j)} \right) = \det \left(\widetilde{\MM}^{(i,j)} \right).
    \end{align}
    Since for row $i-1$ of $\overline{\MM}^{(i,j)}$, all the entries are non-positive,% and sum to a number greater than or equal to negative one, 
    \[
    \widetilde{\MM}^{(i,j)}_{i-1,i-1} \geq 0.
    \]
    Moreover, by construction 
    \[
    \sum_{k\in[n-1]\setminus\{i-1\}} |\widetilde{\MM}^{(i,j)}_{i-1,k}| \leq |\widetilde{\MM}^{(i,j)}_{i-1,i-1}|.
    \]
    For $\ell\neq i-1$, 
    \[
    \widetilde{\MM}^{(i,j)}_{\ell,i-1} = - \overline{\MM}^{(i,j)}_{\ell,\ell}  - \sum_{k\in[n-1]\setminus \{\ell\}} \overline{\MM}^{(i,j)}_{\ell,k} 
    \]
    Therefore since for $k\in [n-1]\setminus \{\ell\}$, $\overline{\MM}^{(i,j)}_{\ell,k} \leq 0$, and 
    \[
    \sum_{k\in[n-1]\setminus \{\ell\}} |\overline{\MM}^{(i,j)}_{\ell,k}| \leq \overline{\MM}^{(i,j)}_{\ell,\ell},
    \]
    we have,
    \[
    \sum_{k\in[n-1]\setminus \{\ell\}} \widetilde{\MM}^{(i,j)}_{\ell,k} = \overline{\MM}^{(i,j)}_{\ell,\ell} - \left( \sum_{k\in[n-1]\setminus \{\ell\}} |\overline{\MM}^{(i,j)}_{\ell,k}| \right) + \left( \sum_{k\in[n-1]\setminus \{\ell\}} |\overline{\MM}^{(i,j)}_{\ell,k}| \right) = \widetilde{\MM}^{(i,j)}_{\ell,\ell}.
    \]
    Therefore $\widetilde{\MM}^{(i,j)}$ is RDDL. Now applying \Cref{lem:det-of-rddl} to $\widetilde{\MM}^{(i,j)}$ proves the result for this case.
    %Now similar to the argument for $\MM$ and $\widetilde{\MM}$, one can see
    %that $\det(\widetilde{\MM}^{(i,j)})$ is the (weighted) number of spanning trees oriented at a vertex in a graph.

        %We add a new row and column to the matrix. The new row is equal to zero. Entry $i-1$ of the new column is equal to one and for any other entry, it is equal to the negative of the sum of the corresponding row in $\overline{\MM}^{(i,j)}$. We then switch the last column and column $i-1$. \mg{I'm stuck.}

    \paragraph{Case 3:} $i < j$.
    This is very similar to the $i > j$ case above and therefore we omit the proof of this. The only difference between this case and Case 2 is that we need a different row switching to obtain the matrix $\overline{\MM}^{(i,j)}$. More specifically, we need to push rows $i,\ldots, j-2$ of $\widehat{\MM}^{(i,j)}$ down and move row $j-1$ to row $i$. Moreover, we need to obtain $\widetilde{\MM}^{(i,j)}$ from $\overline{\MM}^{(i,j)}$ by changing column $i$ instead of column $i-1$. 
\end{proof}

\begin{lemma}
\label{lem:ApproxInvert}
Let $\MM,\NN \in \R^{n \times n}$ be invertible RDDL matrices such that
\[
\MM \approxbar_{\epsilon} \NN,
\]
and for all $i\in[n]$,
\[
\sum_{j\in \left[n\right]} \MM_{ij}
\approx_{\epsilon}
\sum_{j\in \left[n\right]} \NN_{ij}.
\]
Then $\MM^{-1} \approxbar_{2\epsilon n} \NN^{-1}$.
\end{lemma}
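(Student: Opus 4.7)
The plan is to reduce the claim to the matrix-tree theorem via \Cref{lemma:entries-of-inverse}, and then exploit that every edge weight in the associated graphs is a nonnegative combination of ``building blocks'' that are already $\epsilon$-approximated by hypothesis.

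First I would use \Cref{lemma:entries-of-inverse} together with Cramer's rule to write $\MM^{-1}_{ij} = \pm \det(\widetilde{\MM}^{(i,j)})/\det(\MM)$, and then invoke \Cref{lem:det-of-rddl} to express both determinants as weighted counts of spanning trees (oriented toward a dummy root) in graphs whose edge weights are nonnegative. Call the \emph{basic building blocks} the quantities $\{-\MM_{k\ell} : k \neq \ell\}$ together with the row sums $\{\sum_\ell \MM_{k\ell} : k \in [n]\}$; by hypothesis each of these is $\epsilon$-approximated between $\MM$ and $\NN$. For the denominator graph (associated with $\MM$), the edge weights are exactly these basic blocks. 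For the numerator graph (associated with $\widetilde{\MM}^{(i,j)}$), the construction in the proof of \Cref{lemma:entries-of-inverse}, together with a short inspection of the three cases, shows that every off-diagonal entry $-\widetilde{\MM}^{(i,j)}_{k\ell}$ and every row-sum $\sum_\ell \widetilde{\MM}^{(i,j)}_{k\ell}$ is again a nonnegative sum of basic blocks; for example, the constructed column $i-1$ has entries of the form $-\MM_{k',j} + \sum_\ell \MM_{k',\ell}$.

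Next I would apply the closure properties of $\approx_\epsilon$: a sum of nonnegative $\epsilon$-approximations remains $\epsilon$-approximated, so every edge weight in both graphs is $\epsilon$-approximated. Since each spanning tree uses at most $n$ edges, any product $\prod_{e \in T} w_e$ is $n\epsilon$-approximated, and the determinant (a nonnegative sum of such products) is therefore $n\epsilon$-approximated. Hence $\det(\MM) \approx_{n\epsilon} \det(\NN)$ and $\det(\widetilde{\MM}^{(i,j)}) \approx_{n\epsilon} \det(\widetilde{\NN}^{(i,j)})$; since a ratio of $\alpha$- and $\beta$-approximations is $(\alpha+\beta)$-approximated, we conclude $\MM^{-1}_{ij} \approx_{2n\epsilon} \NN^{-1}_{ij}$, giving $\MM^{-1} \approxbar_{2n\epsilon} \NN^{-1}$.

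The main obstacle is the bookkeeping in the second paragraph --- carefully tracing through the row permutation, the summing of columns into one, and the final negation in Cases 2 and 3 of the proof of \Cref{lemma:entries-of-inverse} to confirm that every entry of $\widetilde{\MM}^{(i,j)}$, and in particular every weight of the graph whose spanning trees witness $\det(\widetilde{\MM}^{(i,j)})$, really does decompose as a nonnegative combination of the basic blocks of $\MM$. Once this is verified, the rest is just the product rule for $\approx_\epsilon$ (an $n$-factor loss across a spanning tree), closure under nonnegative sums (no further loss when summing over trees), and one final $n\epsilon$ from the Cramer's rule division.
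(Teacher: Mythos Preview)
Your proposal is correct and follows essentially the same route as the paper's own proof: both reduce to \Cref{lemma:entries-of-inverse} and \Cref{lem:det-of-rddl}, verify that every edge weight appearing in the spanning-tree expansions of $\det(\MM)$ and $\det(\widetilde{\MM}^{(i,j)})$ is a nonnegative combination of the off-diagonals $-\MM_{k\ell}$ and the row sums $\sum_\ell \MM_{k\ell}$, and then apply the product rule for $\approx_\epsilon$ across the $n$ edges of each tree followed by closure under nonnegative sums and a final division. Your ``basic building blocks'' framing is in fact slightly more explicit than the paper's, which carries out the same column-$i{-}1$ computation (arriving at $(-\sum_k \MM_{\ell',k}) + \MM_{\ell',j}$) but does not name the decomposition as you do; the remaining bookkeeping you flag is exactly what the paper works through case by case.
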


Note that just the first condition in \Cref{lem:ApproxInvert} alone is not sufficient for the inverses to be close to each other. For example, consider matrices
\[
\left[
\begin{matrix}
1 & -1\\
-1 & 1
\end{matrix}
\right],
\text{ and }
\left[
\begin{matrix}
1 + \epsilon &  -1\\
-1 & 1
\end{matrix}
\right].
\]
The first one is singular while the second one is invertible.

\begin{proof}
    We define $\widetilde{\NN}^{(i,j)}$ similar to the definition of $\widetilde{\MM}^{(i,j)}$ in the proof of \Cref{lemma:entries-of-inverse}.
    By \eqref{eq:entries-of-inverse} and \eqref{eq:m-tilde-ij-det}, we have
    \[
    \MM^{-1}_{ij} = \frac{\det\left(\widetilde{\MM}^{(i,j)}\right)}{\det\left(\MM\right)}
    \]
    and
    \[
    \NN^{-1}_{ij} = \frac{\det\left(\widetilde{\NN}^{(i,j)}\right)}{\det\left(\NN\right)}
    \]
    First note that since $\det\left(\MM\right)$ and $\det\left(\NN\right)$ are the sum of the product of the weight of the edges of spanning trees and the edge weights in graphs corresponding to $\MM$ and $\NN$ are within a factor of $e^{\epsilon}$ from each other, $\det\left(\NN\right) \approxbar_{\epsilon n} \det\left(\MM\right)$. A similar argument applies to Case 1 in the proof of \Cref{lemma:entries-of-inverse}, which gives 
    \[
    \det\left(\widehat{\MM}^{(i,i)}\right) \approxbar_{\epsilon n} \det\left(\widehat{\NN}^{(i,i)}\right).
    \]
    
    Now consider Case 2 in the proof of \Cref{lemma:entries-of-inverse}. Trivially the edge weights in $\widetilde{\MM}^{(i,j)}$ and $\widetilde{\NN}^{(i,j)}$ outside column $i-1$ are within a factor of $e^{\epsilon}$ of each other. For edge weights in column $i-1$, note that for $\ell\neq i-1$,
    \[
    \widetilde{\MM}^{(i,j)}_{\ell,i-1} = - \overline{\MM}^{(i,j)}_{\ell,\ell}  - \sum_{k\in[n-1]\setminus \{\ell\}} \overline{\MM}^{(i,j)}_{\ell,k} 
    \]
    and 
    \[
    \widetilde{\NN}^{(i,j)}_{\ell,i-1} = - \overline{\NN}^{(i,j)}_{\ell,\ell}  - \sum_{k\in[n-1]\setminus \{\ell\}} \overline{\NN}^{(i,j)}_{\ell,k}. 
    \]
    Let $\ell'$ be the index of the row in $\MM$ corresponding to row $\ell$ in $\widetilde{\MM}^{(i,j)}$. Then
    \[
    \widetilde{\MM}^{(i,j)}_{\ell,i-1}
    =
    -\sum_{k \in [n] \setminus\{j\}} \MM_{\ell',k}
    =
    \left(-\sum_{k \in [n]} \MM_{\ell',k}\right) + \MM_{\ell',j}.
    \]
    Similarly, we have 
    \[
    \widetilde{\NN}^{(i,j)}_{\ell,i-1}
    =
    \left(-\sum_{k \in [n]} \NN_{\ell',k}\right) + \NN_{\ell',j}.
    \]
    % \rp{double =?}
    Since $\MM$ and $\NN$ are RDDL matrices, $(-\sum_{k \in [n]} \MM_{\ell',k})$, $(-\sum_{k \in [n]} \NN_{\ell',k})$, $\MM_{\ell',j}$, $\NN_{\ell',j}$ are nonpositive numbers. Therefore, since $\sum_{k\in \left[n\right]} \MM_{\ell',k}
\approx_{\epsilon}
\sum_{k\in \left[n\right]} \NN_{\ell',k}$ and $\MM_{\ell',j} \approx_{\epsilon} \NN_{\ell',j}$, we have 
\[
\widetilde{\MM}^{(i,j)}_{\ell,i-1} \approx_{\epsilon n} \widetilde{\NN}^{(i,j)}_{\ell,i-1}.
\]
    Thus, 
    \[
    \det\left(\widetilde{\NN}^{(i,j)}\right) \approxbar_{\epsilon n} \det\left(\widetilde{\MM}^{(i,j)}\right).
    \]
    Therefore, we have
    \[
    \MM^{-1}_{ij} \approxbar_{2\epsilon n} \NN^{-1}_{ij}.
    \]

    The result follows similarly for Case 3 in the proof of \Cref{lemma:entries-of-inverse}.
\end{proof}

It remains to recursively apply this to prove
the overall guarantee of the recursive inversion
algorithm in Figure~\ref{fig:RecInvertWithExcess}.

\subsection{Approximate Inversion Using Excess Vector}

In this section, we prove that the algorithm in \Cref{fig:RecInvertWithExcess} finds an approximate inverse of an invertible RDDL matrix.

\begin{figure}[h]
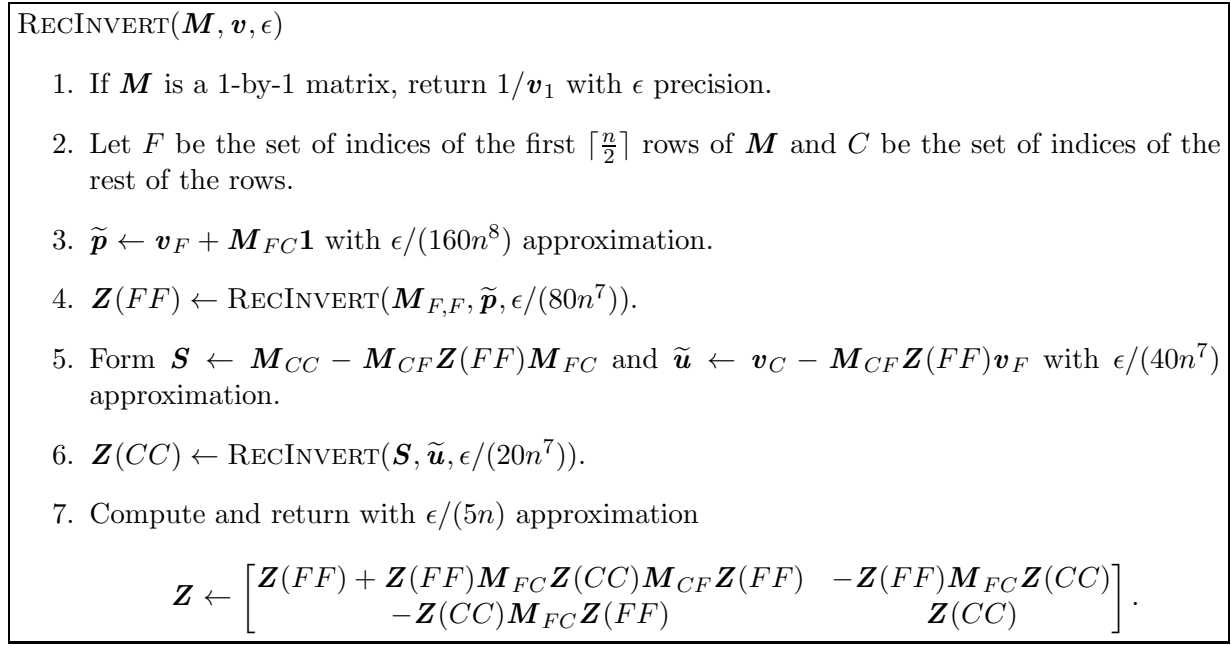

\begin{algbox}
$\textsc{RecInvert}(\MM, \vv, \epsilon)$
\begin{enumerate}
\item If $\MM$ is a $1$-by-$1$ matrix, return $1/\vv_1$ with $\epsilon$ precision. \label{ln:base-induction}
\item Let $F$ be the set of indices of the first $\lceil \frac{n}{2} \rceil$ rows of $\MM$ and $C$ be the set of indices of the rest of the rows.
\item $\widetilde{\pp} \leftarrow \vv_{F} + \MM_{FC} \boldsymbol{1}$ with $\epsilon/(160n^8)$ approximation.
\item $\ZZ(FF) \leftarrow \textsc{RecInvert}(\MM_{F, F}, \widetilde{\pp} ,\epsilon / (80n^7))$. \label{ln:ff-induction}
\item Form $\SS \leftarrow \MM_{CC} - \MM_{CF} \ZZ(FF) \MM_{FC}$ and $\widetilde{\uu} \leftarrow \vv_{C} - \MM_{CF} \ZZ(FF) \vv_{F}$ with $\epsilon/(40 n^7)$ approximation.
\item $\ZZ(CC) \leftarrow  \textsc{RecInvert}(\SS, \widetilde{\uu}, \epsilon / (20n^7))$. \label{ln:cc-induction}
\item Compute and return with $\epsilon/(5n)$ approximation \label{ln:last-line}
\[
\ZZ \leftarrow 
\begin{bmatrix}
\ZZ(FF) + \ZZ(FF) \MM_{FC} \ZZ(CC) \MM_{CF} \ZZ(FF)
&-\ZZ(FF) \MM_{FC} \ZZ(CC)\\
-\ZZ(CC) \MM_{FC} \ZZ(FF) & \ZZ(CC)
\end{bmatrix}.
\]
\end{enumerate}
\end{algbox}
\caption{Psuedocode for recursive inversion
with varying precision}
\label{fig:RecInvertWithExcess}
\end{figure}

\recursionTheorem*
\begin{proof}
We prove the theorem by induction. The base case for a $1$-by-$1$ matrix trivially follows from the construction of Line \ref{ln:base-induction} of the algorithm.
    We now prove $\ZZ(FF) \approxbar_{\epsilon/(40n^7)} \left(\NN_{FF}\right)^{-1}$. Let $\pp = \vv_{F} + \MM_{FC} \boldsymbol{1}$ and $\widetilde{\pp}$ be the floating point approximation of $\pp$. We have $\pp \approxbar_{\epsilon/(160 n^8)} \widetilde{\pp}$.
    
    Let $\widetilde{\NN}_{FF}$ be the matrix such that
    \[
    \widetilde{\NN}_{ij}
    =
    \begin{cases}
    \NN_{ij}, & \text{for $i\neq j$},\\
    \widetilde{\pp}_i + \sum_{\widehat{j} \in F\setminus\{i\}} \MM_{i\widehat{j}}, & \text{for $i = j$}.
    \end{cases}
    \]
    
    First note that $\widetilde{\NN}_{FF} \approxbar_{\epsilon/(160n^8)} \NN_{FF}$ and also row sums approximate each other.
    Therefore by \Cref{lem:ApproxInvert}, $(\widetilde{\NN}_{FF})^{-1} \approxbar_{\epsilon/(80n^7)} \left(\NN_{FF}\right)^{-1}$.
    Moreover by induction
    on Line \ref{ln:ff-induction} of \textsc{RecInvert}, we have
    \[
    \ZZ(FF) \approxbar_{\epsilon/(80n^8)} \left(\widetilde{\NN}_{FF}\right)^{-1}.
    \]
    Therefore, we have 
    \[
    \ZZ(FF) \approxbar_{\epsilon/(40 n^7)} \left(\NN_{FF}\right)^{-1}.
    \]

    Now we show 
    \[
    \ZZ(CC) \approxbar_{\epsilon/(5n^5)} \sc(\NN,C)^{-1}.
    \]
    Let $\TT$ be a matrix given by
    \[
    \TT_{ij}
    =
    \begin{cases}
    \SS_{ij} \qquad \text{for $i\neq j$}\\
    -\left(\widetilde{\uu}_{i} + \sum_{\widehat{j}
    \in C \setminus \left\{i\right\}} \SS_{i\widehat{j}}\right)
    \qquad \text{for $i = j$}.
    \end{cases}
    \]
    Then by induction on Line \ref{ln:cc-induction} of \textsc{RecInvert}, we have
    $\ZZ(CC) \approxbar_{\epsilon/(20 n^7)} \TT^{-1}$. Moreover for $i\neq j$, $\TT_{ij} \approx_{\epsilon/(20 n^7)} \sc(\NN,C)_{ij}$.
        Also for each $i\in C$, we have
    \begin{align*}
    \sc(\NN,C)_{ii}
    & = \NN_{ii} - \NN_{i F} \NN_{FF}^{-1} \NN_{F i} 
    \\
    & =
    -\left(\vv_i + \sum_{j\in[n]\setminus\{i\}} \MM_{ij}\right)
    - \NN_{i F} \NN_{FF}^{-1} \NN_{F i}
    \\ & =
    -\left(\vv_i + \sum_{j\in[n]\setminus\{i\}} \NN_{ij}\right)
    - \NN_{i F} \NN_{FF}^{-1} \NN_{F i}
    \end{align*}
    where the approximate we compute satisfies
    \begin{align*}
        \TT_{ii} & = -\left(\widetilde{\uu}_{i} + \sum_{j\in C \setminus \{i\}} \SS_{ij}\right)
        \\ &
        \approx_{\epsilon/(40n^7)} -\left(\vv_i - \MM_{i F} \ZZ(FF) \vv_{F} + \sum_{j \in C \setminus \{i\}} \left(\MM_{i j} - \MM_{i F} \ZZ(FF) \MM_{F j}\right)\right)
        \\ & \approx_{\epsilon/(40n^7)} 
        - \left(\vv_i - \NN_{i F} \NN_{FF}^{-1} \vv_{F} + \sum_{j \in C \setminus \{i\}} \left(\NN_{i j} - \NN_{i F} \NN_{FF}^{-1} \NN_{F j}\right)\right)
        \\ & = 
        - \left(\vv_i - \NN_{i F} \NN_{FF}^{-1} \left(-\NN_{F [n]}\boldsymbol{1}\right) + \sum_{j \in C \setminus \{i\}} \left(\NN_{i j} - \NN_{i F} \NN_{FF}^{-1} \NN_{F j}\right)\right)
        \\ & =
        - \left(\vv_i + \NN_{i F} \NN_{FF}^{-1} \NN_{F F}\boldsymbol{1} + \NN_{i F} \NN_{FF}^{-1} \NN_{F i} + \sum_{j \in C \setminus \{i\}} \NN_{i j}\right)
        \\ & = 
        - \left(\vv_i + \NN_{i F}\boldsymbol{1} + \NN_{i F} \NN_{FF}^{-1} \NN_{F i} + \sum_{j \in C \setminus \{i\}} \NN_{i j}\right)
        \\ & = 
        -\left(\vv_i + \sum_{j\in[n]\setminus\{i\}} \NN_{ij}\right) - \NN_{i F} \NN_{FF}^{-1} \NN_{F i}
    \end{align*}
Therefore,
\[
\sc(\NN,CC)_{ii} \approx_{\epsilon/(20n^7)} \TT_{ii}.
\]

Furthermore,
\begin{align*}
\TT_{ii} + \sum_{j\in C \setminus \left\{i\right\}} \SS_{ij}
& = - \widetilde{\uu}_i
\\
& \approx_{\epsilon/(40n^7)}
-\left(\vv_i - \MM_{i F} \ZZ(FF) \vv_{F}\right)
\\
& \approx_{\epsilon/(40n^7)}
    -\left(\vv_i - \NN_{i F} \NN_{FF}^{-1} \vv_{F}\right)
\\
& =
    -\left(\vv_i - \NN_{i F} \NN_{FF}^{-1} \left(-\NN_{F \left[n\right]}\boldsymbol{1}\right)\right)
\\
& =
-\left(\vv_i + \NN_{i F}\boldsymbol{1}
+ \NN_{i F} \NN_{FF}^{-1} \NN_{F C}\boldsymbol{1}\right)
\\
& =
-\left(\left(-\NN_{i \left[n\right]} \boldsymbol{1}\right)
+ \NN_{i F}\boldsymbol{1}
+ \NN_{i F} \NN_{FF}^{-1} \NN_{F C}\boldsymbol{1}\right)
\\
& = 
-\left(-\NN_{i C} \boldsymbol{1}
+ \NN_{i F} \NN_{FF}^{-1} \NN_{F C}\boldsymbol{1}\right)
\\
& = 
\NN_{i C} \boldsymbol{1}
- \NN_{i F} \NN_{FF}^{-1} \NN_{F C}\boldsymbol{1}.
\end{align*}

Also,
\[
\sc\left(\NN,C\right)_{ii}
+
\sum_{j\in C \setminus \left\{i\right\}}
\sc\left(\NN,C\right)_{ij}
=
\sum_{j\in C} \NN_{ij}
-
\NN_{i F} \NN_{FF}^{-1} \NN_{F j},
\]
so the row sums also approximate each other with a factor of $e^{\epsilon/(20 n^7)}$.
Therefore by \Cref{lem:ApproxInvert},
\[
\TT^{-1} \approxbar_{\epsilon/(10n^6)} \sc\left(\NN,C\right)^{-1},
\]
and thus $\ZZ(CC) \approxbar_{\epsilon/(5n^6)} \sc(\NN,C)^{-1}$.
Taking these into account for the computation of $\ZZ$ in Line \ref{ln:last-line}, we have $\ZZ \approxbar_{\epsilon / n} \NN^{-1}$.

Note that since our recursion goes for at most $O(\log n)$ iterations, the required accuracy at the lowest level is $\frac{\epsilon}{n^{O(\log n)}}$. Therefore it is enough to work with numbers with $O(\log(\frac{1}{\epsilon}) + \log^2 n)$. The number of arithmetic operations for computing matrix multiplications is also $O(n^3 \log(n))$. Therefore, taking the bit complexity of the input into account, the total number of bit operations is $\Otil(n^3 \cdot (L + \log(\frac{1}{\epsilon})))$.
\end{proof}

We are now ready to prove the main theorem about the computation of escape probabilities.

\mainTheorem*
\begin{proof}
Let $\AA$ be the matrix associated with the Markov chain (random walk) associated with graph $G$.
Without loss of generality suppose the index of $t$ and $p$ in the matrix are $n-1$ and $n$. 
By \eqref{eq:escape-formula}, we need to compute 
\[
- \left(\II - \AA_{1:\left(n - 2\right),1:\left(n-2\right)}\right)^{-1}
\AA_{1:\left(n - 2\right),n}.
\]
Therefore the first part of the algorithm is to call \textsc{RecInvert} procedure.
\[
\XX \leftarrow \textsc{RecInvert}(\II - \AA_{1:(n-2),1:(n-2)}, \AA_{1:(n-2),(n-1)} + \AA_{1:(n-2),n}, \frac{\epsilon}{2}).
\]
Then the algorithm returns $-\XX \AA_{1:(n-2),(n-1)}$ as the solution. Note that all entries of $\XX$ are nonnegative and all entries $-\XX \AA_{1:(n-2),(n-1)}$ are nonpositive. Therefore if we perform the floating-point operations with $\epsilon/(2n+2)$ accuracy, the output vector is within a factor of $\epsilon/2$ of $-\XX \AA_{1:(n-2),(n-1)}$. Combining this with the error bound of $\XX$ gives the result. The running time directly follows from \Cref{thm:main-recursion}.
\end{proof}

\bibliographystyle{alpha}
\bibliography{ref}

\end{document}